\documentclass[onecolumn]{article}
\usepackage{amssymb,amsmath,amsthm}
\usepackage{mathrsfs}
\usepackage[pdftex,pdfpagelabels,plainpages=false]{hyperref}
\usepackage{url}
\usepackage{bbm}   
\usepackage{graphicx}
\usepackage{geometry}
\usepackage[final]{changes}

\usepackage[inline,marginclue,final,silent]{fixme}

\usepackage{paralist}

\usepackage[all]{xy}
\input xy
\xyoption{all}
\xyoption{arc}
\xyoption{line}

\newcommand{\beq}{\begin{displaymath}}
\newcommand{\eeq}{\end{displaymath}}
\newcommand{\beqn}{\begin{equation}}
\newcommand{\eeqn}{\end{equation}}
\newcommand{\beqa}{\begin{eqnarray*}}
\newcommand{\eeqa}{\end{eqnarray*}}
\newcommand{\beqna}{\begin{eqnarray}}
\newcommand{\eeqna}{\end{eqnarray}}

\newcommand{\re}[1]{~(\ref{#1})}

\newcommand{\N}{\mathbb{N}}
\newcommand{\Z}{\mathbb{Z}}
\newcommand{\R}{\mathbb{R}}
\newcommand{\C}{\mathbb{C}}

\newcommand{\ra}{\rightarrow}

\newcommand{\lra}{\longrightarrow}

\newcommand{\tr}{\mathrm{tr}}

\newtheorem{proposition}{Proposition}[section]
\newtheorem{theorem}[proposition]{Theorem}

\newtheorem{lemma}[proposition]{Lemma}

\theoremstyle{definition} 
\newtheorem{example}[proposition]{Example}
\newtheorem{remark}[proposition]{Remark}

\newtheorem{question}[proposition]{Question}
\setcounter{secnumdepth}{3}



\title{\replaced{On the existence of quantum representations\\ for two dichotomic measurements}{Representing two binary repeatable\\ measurements on Hilbert space}}
\author{Tobias Fritz\\
Max Planck Institute for Mathematics\\
\texttt{fritz@mpim-bonn.mpg.de}}

\begin{document}

\maketitle

\begin{abstract}
\replaced{Under which conditions do outcome probabilities of measurements possess a quantum-mechanical model? This kind of problem is solved here for the case of two dichotomic von Neumann measurements which can be applied repeatedly to a quantum system with trivial dynamics. The solution uses methods from the theory of operator algebras and the theory of moment problems. The ensuing conditions reveal surprisingly simple relations between certain quantum-mechanical probabilities. It also shown that generally, none of these relations holds in general probabilistic models. This result might facilitate further experimental discrimination between quantum mechanics and other general probabilistic theories.}{When do measurements have a quantum-mechanical model? This article thi kind of inverse problem for iterated measurements in quantum mechanics and presents a solution of the simplest non-trivial case. Before that, preliminary observations on pre- and postselection reveal some unexpected properties of quantum-mechanical probabilities, which might potentially be tested in experiments. Then, the simplest non-trivial case of the inverse problem, corresponding to two binary repeatable measurements, is analyzed and solved using methods from operator algebras and the theory of moment problems. Also, the quantum region is studied in some truncations. After that, it is shown that the general probabilistic region spans the whole space of probabilities. This undermines the fact that quantum theories are a very special type of general probabilistic theories. The article ends with some remarks on the complexity of the general inverse problem for quantum measurements and an outline of some properties that potential experimental tests of the quantum constraints should have.}
\end{abstract}


\section{Introduction}

Consider the following situation: an experimenter works with some fixed physical system whose theoretical description is assumed to be unknown. In particular, it is not known whether the system obeys the laws of quantum mechanics or not. Suppose \replaced{also}{now} that the experimenter can \replaced{conduct}{do} two different types of measurement\added{---call them} $a$ and $b$ \replaced{---}{,} each of which is \added{\textit{dichotomic}, i.e.} has the possible outcomes $0$ and $1$. In this \replaced{paper}{article}, such a\deleted{s} system will be referred to as the ``black box figure~\ref{bb}''.

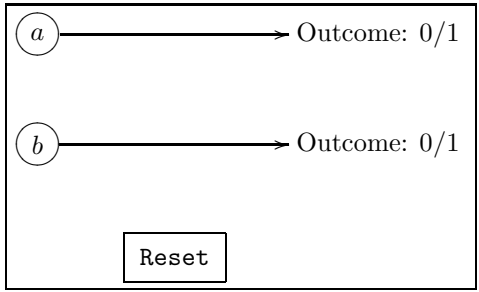
\begin{figure}
\caption{A black box with two \replaced{dichotomic}{binary} measurements and an initialization button.}
\label{bb}
\beq
\fbox{$\xymatrix{
*++[o][F]{a}\ar[rr] && \textrm{Outcome: }0/1\\
*++[o][F]{b}\ar[rr] && \textrm{Outcome: }0/1\\
& *++[F]{\texttt{Reset}}}$}
\eeq
\end{figure}

The experimenter can conduct several repeated measurements on the same system---like first $a$, then $b$, and then again $a$---and also he can conduct many of these repeated measurements on independent copies of the original system by hitting the ``$\mathtt{Reset}$'' button and starting over. \replaced{Thereby}{In this way}, he will obtain his results in terms of estimates for probabilities of the form
\beqn
\label{probex}
P_{a,b,a}(1,0,0)
\eeqn
which stands for the probability of obtaining the \added{sequence of} outcomes $1$, $0$, $0$, given that he first measures $a$, then $b$, and then again $a$.

Now suppose that the experimenter finds out that the measurements $a$ and $b$ are always repeatable, in the sense that measuring \replaced{one of them}{twice} consecutively yields \replaced{always the same result}{identical results} with certainty. In his table of experimentally determined probabilities, this is registered by statements like $P_{b,a,a,b}(0,1,0,0)=0$.

In a quantum-mechanical description of the system, the repeatable measurements $a$ and $b$ \replaced{are}{will} each \deleted{be} represented by projection operators on some Hilbert space $\mathcal{H}$ and the initial state of the system is given by some state on $\mathcal{H}$; it is irrelevant whether this state is assumed to be pure or mixed, since both cases can be reduced to each other: every pure state is trivially mixed, and a mixed state can be purified by entangling the system with an ancilla. In any case, the probabilities \added{like}\re{probex} can be calculated from this data by the usual rules of quantum mechanics.

\begin{question}
Which conditions do these probabilities $P_\cdot(\cdot)$ have to satisfy in order for a quantum-mechanical description of the system to exist?
\end{question}

Mathematically, this is a certain moment problem in noncommutative probability theory. \deleted{In its solution, I have tried to be as rigorous as possible.} Physically, the constraints turn out to be so unexpected that an intuitive explanation of their presence seems out of reach. 

\added{A variant of this problem has been studied by Khrennikov~\cite{Kh}, namely the case of two observables $a$ and $b$ with discrete non-degenerate spectrum. In such a situation, any post-measurement state is uniquely determined by the outcome of the directly preceding measurement. Hence in any such quantum-mechanical model, the outcome probabilities of an alternating measurement sequence $a,b,a,\ldots$ form a Markov chain, meaning that the result of any intermediate measurement of $a$ (respectively $b$) depends only on the result of the directly preceding measurement of $b$ (repectively $a$). Furthermore, by symmetry of the scalar product $|\langle\psi|\varphi\rangle|^2=|\langle\varphi|\psi\rangle|^2$, the corresponding matrix of transition probabilities is symmetric and doubly stochastic. In the case of two dichotomic observables, non-degenarcy of the spectrum is an extremely restrictive requirement; in fact, a dichotomic observable is necessarily degenerate as soon as the dimension of its domain is at least $3$. It should then not be a surprise that neither the Markovianness nor the symmetry and double stochasticity hold in general, making the results presented in this paper vastly more complex than Khrennikov's.}

\paragraph{Summary.} This \replaced{paper}{article} is structured as follows. Section~\ref{prelim} begins by generally studying a \replaced{dichotomic}{binary} quantum measurement under the conditions of pre- and postselection. It is found that both outcomes are equally likely, provided that the postselected state is orthogonal to the preselected state. Section~\ref{twobinrep} goes on by settling notation and terminology for the probabilities in the black box figure~\ref{bb} and describes the space of all conceivable outcome probability distributions for such a system. The main theorem describing the quantum region within this space is stated and proven in section~\ref{quantum}\replaced{. The largest part of this section}{, the largest part of which} is solely devoted to the theorem's technical proof; some relevant mathematical background material on moment problems can be found in the appendix~\ref{app}. Section~\ref{truncations} then studies projections of the space of all \added{conceivable outcome} probabilities and mentions some first results on the quantum region therein; \replaced{these finite-dimensional projections}{this} would mostly be relevant for potential experimental tests. Section~\ref{gp} \replaced{continues}{goes on} by proving that every point in the whole \replaced{space of all conceivable outcome probability distributions}{probability space} has a model in terms of a general probabilistic theory. As described in section~\ref{gens}, determining the quantum region for a higher number of measurements or a higher number of outcomes should be expected to be very hard. Section~\ref{experiments} mentions some properties that experiments comparing quantum-mechanical models to \replaced{different}{other} general probabilistic models should have. Finally, section~\ref{concl} briefly concludes the \replaced{paper}{article}.

\paragraph{Acknowledgements.} I want to thank Andrei Khrennikov for organizing a very inspiring conference ``Quantum Theory: Reconsideration of Foundations 5'' in V\"axj\"o. During discussions, I have received useful input from Cozmin Ududec, who encouraged me to think about iterated measurements in general probabilistic theories, as well as from Ingo Kamleitner, who suggested the quantum dot experiment described in section~\ref{prelim}. \added{I have also highly profited from conversations with Fabian Furrer and Wojciech Wasilewski. Finally, this work would not have been possible without the excellent research conditions within the IMPRS graduate program.}

\paragraph{Notation and terminology.} 

\replaced{Given a projection operator $p$, its}{Quantum-mechanically, a binary repeatable measurement corresponds to a projection operator $p$. Its} negation is written as $\overline{p}\equiv 1-p$. In order to have a compact index notation for $p$ and $\overline{p}$ at once, I \added{will} also write $p^1=p$ and $p^0=\overline{p}=1-p$, \added{which indicates that $p^1$ is the eigenspace projection corresonding to the measurement outcome $1$, while $p^0$ is the eigenspace projection corresponding to the measurement outcome $0$.}

The Pauli matrices
\beq
\sigma_x=\left(\begin{array}{cc}0&1\\1&0\end{array}\right),\quad\sigma_z=\left(\begin{array}{cc}1&0\\0&-1\end{array}\right),
\eeq
will be used in section~\ref{quantum} and in the appendix.

\added{Finally, $\{0,1\}^*\equiv\cup_{n\in\N}\{0,1\}^n$ is the set of all binary strings of arbitrary length.}

\section{Preliminary observations}
\label{prelim}

Before turning to the general case, this section \replaced{presents}{derives} some results about \added{outcome probabilities for} the \deleted{iterated} measurement \added{sequence} $a,b,a$ and reveals some unexpected constraints for quantum-mechanical models. One may think of the \added{two} measurements of $a$ in $a,b,a$ as being pre- and postselection, respectively, for the intermediate measurement of $b$.

So to ask a slightly different question first: how does a general quantum-mechanical \replaced{dichotomic}{binary} measurement $b$ behave under conditions of pre- and postselection? Suppose we conduct an experiment which 
\begin{itemize}
\item preselects with respect to a state $|\psi_i\rangle$, i.e. \added{initially,} it conducts a \deleted{initial} measurement of the projection \added{operator} $|\psi_i\rangle\langle\psi_i|$ and starts over in case of a negative result, and
\item postselects with respect to a state $|\psi_f\rangle$ i.e., it \added{finally} conducts a \deleted{final} measurement of the projection operator $|\psi_f\rangle\langle\psi_f|$ and starts \added{all} over from the beginning in case of a negative result.
\end{itemize}
In between the pre- and the postselection, the experimenter measures the \replaced{dichotomic}{binary} observable $b$. For simplicity, the absence of any additional dynamics is assumed.

This kind of situation \replaced{can only occur}{only makes sense} when the final postselection does not always produce a negative outcome, so that the conditional probabilities with respect to pre- and postselection have definite values. This is the case if and only if
\beq
\langle\psi_i|b|\psi_f\rangle\neq 0\quad\added{\textrm{or}}\quad\langle\psi_i|(1-b)|\psi_f\rangle\neq 0,
\eeq
which will be assumed to hold from now on; under the assumption of the following proposition, these two conditions are equivalent.

\begin{proposition}
\label{preorthpost}
\added{In such a situation}, \replaced{the condition}{In the case that} $\langle\psi_i|\psi_f\rangle=0$ \added{implies that} the two outcomes of $b$ have equal probability, independently of any details of the particular quantum-mechanical model:
\beqa
&&P\left(b=0\:\Big|\:\textrm{pre}=|\psi_i\rangle,\textrm{post}=|\psi_f\rangle\right)\\
&=&P\left(b=1\:\Big|\:\textrm{pre}=|\psi_i\rangle,\textrm{post}=|\psi_f\rangle\right)\:\:\:=\:\:\:\frac{1}{2}
\eeqa
\end{proposition}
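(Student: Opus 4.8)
The plan is to reduce the claim to the Aharonov--Bergmann--Lebowitz rule for pre- and postselected outcome probabilities and then to exploit orthogonality in a one-line computation. First I would note that, whatever the initial state of the system (pure or mixed, on whatever Hilbert space $\mathcal{H}$), conditioning on a positive outcome of the preselection measurement $|\psi_i\rangle\langle\psi_i|$ leaves the system in the pure state $|\psi_i\rangle$; this is exactly where the phrase ``independently of any details of the particular quantum-mechanical model'' gets cashed out, since nothing about $\mathcal{H}$ or the initial state enters beyond this point. Because no dynamics intervenes, measuring $b$ then yields outcome $k\in\{0,1\}$ with probability $\langle\psi_i|b^k|\psi_i\rangle=\|b^k|\psi_i\rangle\|^2$ and collapses the state to $b^k|\psi_i\rangle/\|b^k|\psi_i\rangle\|$, after which the postselection $|\psi_f\rangle\langle\psi_f|$ succeeds with probability $|\langle\psi_f|b^k|\psi_i\rangle|^2/\|b^k|\psi_i\rangle\|^2$.

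Multiplying these, the joint probability that $b$ returns $k$ and the postselection succeeds equals $|\langle\psi_f|b^k|\psi_i\rangle|^2$, the two factors of $\|b^k|\psi_i\rangle\|^2$ canceling. Normalizing by the total postselection-success probability gives
\beq
P\left(b=k\:\Big|\:\textrm{pre}=|\psi_i\rangle,\textrm{post}=|\psi_f\rangle\right)=\frac{|\langle\psi_f|b^k|\psi_i\rangle|^2}{|\langle\psi_f|b^1|\psi_i\rangle|^2+|\langle\psi_f|b^0|\psi_i\rangle|^2},
\eeq
which is well-defined precisely under the nondegeneracy hypothesis stated just before the proposition.

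The final step is the computation that makes everything collapse. Since $b^0=1-b^1$, one has $\langle\psi_f|b^0|\psi_i\rangle=\langle\psi_f|\psi_i\rangle-\langle\psi_f|b^1|\psi_i\rangle$, and the hypothesis $\langle\psi_i|\psi_f\rangle=0$ forces $\langle\psi_f|\psi_i\rangle=0$ as well, so $\langle\psi_f|b^0|\psi_i\rangle=-\langle\psi_f|b^1|\psi_i\rangle$. Hence the two numerators above have equal modulus, each conditional probability equals $\tfrac12$, and we are done. I do not anticipate a genuine obstacle; the only things to be careful about are (i) phrasing the first step so that model-independence is manifest, and (ii) making explicit that the two nondegeneracy conditions quoted before the proposition indeed coincide once $\langle\psi_i|\psi_f\rangle=0$, since then $\langle\psi_i|(1-b)|\psi_f\rangle=-\langle\psi_i|b|\psi_f\rangle$, so both reduce to $\langle\psi_i|b|\psi_f\rangle\neq0$ and the common denominator above is nonzero.
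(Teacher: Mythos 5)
Your proposal is correct and follows essentially the same route as the paper: both reduce the claim to the observation that the joint probability of outcome $k$ and successful postselection is $|\langle\psi_f|b^k|\psi_i\rangle|^2$, and both then use $\langle\psi_f|\psi_i\rangle=0$ to conclude $\langle\psi_f|(1-b)|\psi_i\rangle=-\langle\psi_f|b|\psi_i\rangle$, so the two joint probabilities coincide and each conditional probability is $\tfrac12$. Your added remarks on the ABL normalization and on the equivalence of the two nondegeneracy conditions under orthogonality are consistent with (and slightly more explicit than) the paper's treatment.
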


Note that such a \added{pre- and postselected} \replaced{dichotomic}{binary} quantum measurement would \replaced{therefore be}{yield} a perfectly unbiased random number generator. 

\begin{proof}
The proof of proposition~\ref{preorthpost} is by straightforward calculation. Upon preselection, the system is in the state $|\psi_i\rangle$. The probability of measuring $b=0$ and successful postselection is given by
\beqa
||\:|\psi_f\rangle\langle\psi_f|(1-b)|\psi_i\rangle||^2&=&\langle\psi_i|(1-b)|\psi_f\rangle\langle\psi_f|(1-b)|\psi_i\rangle\\
&=&-\langle\psi_i|b|\psi_f\rangle\langle\psi_f|(1-b)|\psi_i\rangle\\
&=&\langle\psi_i|b|\psi_f\rangle\langle\psi_f|b|\psi_i\rangle\\
&=&||\:|\psi_f\rangle\langle\psi_f|b|\psi_i\rangle||^2.
\eeqa
\replaced{This}{which} equals the probability of measuring $b=1$ and successful postselection, \added{so that both conditional probabilities equal $1/2$.}
\end{proof}

\replaced{As}{For} a concrete example, consider a quantum particle which can be located in either of three boxes $|1\rangle$, $|2\rangle$, and $|3\rangle$, so that the state space is given by
\beq
\mathcal{H}=\C^3=\mathrm{span}\left\{|1\rangle,|2\rangle,|3\rangle\right\}
\eeq
Now let $\zeta$ be a third root of unity, such that $1+\zeta+\zeta^2=0$, and use initial and final states as follows:
\beqa
&\textrm{preselection: }|\psi_i\rangle=\frac{|1\rangle+|2\rangle+|3\rangle}{\sqrt{3}}\\\\
&\xymatrix{*++[F]{\textrm{box }|1\rangle} & *++[F]{\textrm{box }|2\rangle} & *++[F]{\textrm{box }|3\rangle}}\\\\
&\textrm{postselection: }|\psi_f\rangle=\frac{|1\rangle+\zeta|2\rangle+\zeta^2|3\rangle}{\sqrt{3}}
\eeqa
\replaced{Take the intermediate dichotomic measurement to be given by opening one of the boxes and checking whether the particle is there. This}{Then upon opening any box as the intermediate measurement and detecting presence of the particle} will locate the particle in that box with a (conditional) probability of \emph{exactly} $1/2$; see~\cite{Aha} for the original version of this three-boxes thought experiment, with even more counterintuitive consequences. Possibly such an experiment might be realized \added{in a way similar to the optical realization of the original Aharanov-Vaidman thought experiment~\cite{RLS} or by} using quantum dots as boxes. And possibly a high-precision version of such an experiment---looking for deviations from the quantum prediction of exactly $1/2$---might be an interesting further experimental test of quantum mechanics. In order to guarantee the crucial assumption of exact orthogonality of initial and final states, one could implement both pre- and postselection via the same von Neumann measurement and select for a final outcome differing from the initial outcome.

A similar calculation as in the proof of proposition~\ref{preorthpost} also shows that the following more general statement is true:

\begin{proposition}
\label{genpreorthpost}
\begin{enumerate}
\item Given \replaced{any discrete}{an} observable $a$ together with two different eigenvalues $\lambda_0\neq\lambda_1$ and a projection observable $b$, the outcome probabilities for $b$ under $(a=\lambda_0)$-preselection and $(a=\lambda_1)$-postselection are equal:
\beq
P_b\left(0\:\Big|\:a_{\mathrm{pre}}=\lambda_0,a_{\mathrm{post}}=\lambda_1\right)=P_b\left(1\:\Big|\:a_{\mathrm{pre}}=\lambda_0,a_{\mathrm{post}}=\lambda_1\right)=\frac{1}{2}
\eeq
\item The same holds true upon additional preselection before the first measurement of $a$, and also upon additional postselection after the second measurement of $a$.
\end{enumerate}
\end{proposition}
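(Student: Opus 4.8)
The plan is to isolate a single algebraic identity and then obtain all of parts (a) and (b) by feeding the various selection scenarios into it. Write $P_0$ and $P_1$ for the spectral projections of the discrete observable $a$ onto the eigenvalues $\lambda_0$ and $\lambda_1$; since $a$ is self-adjoint and $\lambda_0\neq\lambda_1$, these are orthogonal, $P_0P_1=P_1P_0=0$. The key claim is: for any positive operators $\rho$ with $P_0\rho P_0=\rho$ and $M$ with $P_1MP_1=M$, and any projection $b$,
\[
\tr(\rho\, b\, M\, b)=\tr\bigl(\rho\,(1-b)\,M\,(1-b)\bigr).
\]
This is proved just as in Proposition~\ref{preorthpost}: expand $(1-b)M(1-b)=M-bM-Mb+bMb$, so the difference of the two sides equals $-\tr(\rho M)+\tr(\rho bM)+\tr(\rho Mb)$, and each of these three terms vanishes because, after inserting $\rho=P_0\rho P_0$ and $M=P_1MP_1$ and using cyclicity of the trace, every summand carries a factor $P_0P_1$ or $P_1P_0=0$.

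Given this identity, part (a) is immediate. Conditioning on the first measurement of $a$ yielding $\lambda_0$ puts the system, up to normalization, into the state $\rho:=P_0\rho_{\mathrm{init}}P_0$, which satisfies $P_0\rho P_0=\rho$. The joint (unnormalized) probability of obtaining $b=i$ and then $a=\lambda_1$ is $\tr(P_1b^i\rho\, b^iP_1)=\tr(\rho\, b^iP_1b^i)$, and taking $M=P_1$ in the identity shows this is the same for $i=0$ and $i=1$. The two weights sum to the total postselection probability, which is positive by the standing nontriviality assumption, so both conditional probabilities equal $\tfrac12$.

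For part (b), neither additional selection requires new work. An extra preselection before the first measurement of $a$ merely replaces $\rho_{\mathrm{init}}$ by some other density operator, and the argument for (a) was carried out for an arbitrary initial state. An extra postselection after the second measurement of $a$, say with respect to a projection $\Pi$, changes the joint weight of the outcome $b=i$ to $\tr(\Pi P_1b^i\rho\, b^iP_1\Pi)=\tr(\rho\, b^iMb^i)$ with $M:=P_1\Pi P_1$, which is again positive and satisfies $P_1MP_1=M$, so the identity applies verbatim. (One could alternatively deduce (a) from Proposition~\ref{preorthpost} directly, by decomposing $\rho$ and $P_1$ into rank-one pieces, any two of which are orthogonal; but the operator computation above is shorter and subsumes (b) uniformly.)

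I do not expect a genuine obstacle here: the only thing to get right is the bookkeeping showing that the extra pre- and postselections can be absorbed into the operators $\rho$ and $M$ without spoiling positivity or the support conditions, so that the one-line cancellation does all the work. The essential mechanism is exactly that of Proposition~\ref{preorthpost}: orthogonality of the two eigenspaces of $a$ annihilates precisely the cross terms that would otherwise distinguish the outcome $b=0$ from $b=1$.
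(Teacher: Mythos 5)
Your proof is correct and uses exactly the mechanism the paper alludes to when it says Proposition~\ref{genpreorthpost} follows by ``a similar calculation'' to Proposition~\ref{preorthpost}: orthogonality of the $\lambda_0$- and $\lambda_1$-eigenspaces kills the $M$, $bM$ and $Mb$ terms, leaving $\tr(\rho\,bMb)=\tr(\rho\,(1-b)M(1-b))$. Your operator formulation with $P_0\rho P_0=\rho$ and $P_1MP_1=M$ is a clean way to package the details the paper omits, and it uniformly covers degenerate eigenvalues, mixed initial states, and the extra pre/postselections of part (b); the only implicit assumption, as in the paper, is that the total postselection weight is nonzero so the conditional probabilities are defined.
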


\added{So} what does \added{all} this imply for quantum-mechanical models of the black box figure~\ref{bb}? Given that one measures the sequence $a,b,a$ such that the two measurements of $a$ yield $0$ and $1$ respectively, then the two outcomes for $b$ have equal probability:
\beqn
\label{aba}
\fbox{$P_{a,b,a}(0,0,1)=P_{a,b,a}(0,1,1)$}
\eeqn
Similar relations can be obtained \replaced{from}{form} this equation by permuting $a\leftrightarrow b$ and $0\leftrightarrow 1$. In words: given that the second measurement of $a$ has a result different from the first, then the intermediate \replaced{dichotomic}{binary} measurement of $b$ has conditional probability $1/2$ for each outcome, no matter what the physical details of the quantum system are and what the initial state is. This is trivially true in the case that $a$ and $b$ commute: then, both probabilities in\re{aba} vanish.

\section{Probabilities for two \replaced{dichotomic}{binary} repeatable measurements}
\label{twobinrep}

In the situation of figure~\ref{bb}, the repeatability assumption for both $a$ and $b$ has the consequence that it is sufficient to consider alternating measurements of $a$ and $b$ only. Therefore, all non-trivial outcome probabilities are encoded in the following two stochastic processes:
\beq
P_{a,b,a,\ldots}(\ldots)
\eeq
and
\beq
P_{b,a,b,\ldots}(\ldots).
\eeq
Both of these expressions are functions taking a finite binary string in $\{0,1\}^*$ as their argument, and returning the probability of that outcome for the specified sequence of alternating measurements. \added{In the rest of this paper, the probabilities of the form $P_{a,b,a,\ldots}$ will be denoted by $P_a$ for the sake of brevity, while similarly $P_b$ stands for the probabilities determining the second stochastic process $P_{b,a,b,\ldots}$.}

Since total probability is conserved, it is clear that for every finite binary string $r\in\{0,1\}^*$,
\beqn
\label{conserved}
\begin{split}
P_{a\deleted{,b,a,\ldots}}(r)&=P_{a\deleted{,b,a,\ldots}}(r,0)+P_{a\deleted{,b,a,\ldots}}(r,1)\\
P_{b\deleted{,a,b,\ldots}}(r)&=P_{b\deleted{,a,b,\ldots}}(r,0)+P_{b\deleted{,a,b,\ldots}}(r,1)
\end{split}
\eeqn
 A probability assignment for the $P_a$'s and $P_b$'s is called \emph{admissible} whenever the probability conservation laws\re{conserved} hold.

\section{Classification of probabilities in quantum theories}
\label{quantum}

Now let us assume that the black box figure~\ref{bb} does have a quantum-mechanical description and determine \added{all} the constraints that then \added{have to} hold for the probabilities $P_a$ and $P_b$.

The final results \replaced{will be}{are} presented right now at the beginning. \deleted{and} The rest of the section is then devoted to showing how this theorem can be derived from the mathematical results presented in the appendix.

Given a binary \replaced{string}{sequence} $r\in\{0,1\}^n$, denote the number of switches in $r$ by $s(r)$, i.e. the number of times that a $1$ follows a $0$ or a $0$ follows a $1$. \deleted{Similarly, $s(r,r')$ is the number of switches in the concatenated string $r,r'$.} The single letter $r$ and the sequence $r_1,\ldots,r_n$ are interchangeable notation \added{for the same binary string}.

The overline notation $\overline{r}$ stands for the inverted \replaced{string}{sequene}, i.e. $0\leftrightarrow 1$ in $r$. The letter $\mathcal{C}$ denotes the convex subset of $\R^4$ that is defined and characterized in the appendix.

\begin{theorem}
\label{mainthm}
A quantum-mechanical description of the black box figure~\ref{bb} exists if and only if the outcome probabilities satisfy the following constraints:
\begin{itemize}
\item \added{For every $r\in\{0,1\}^{n+1}$ and $i\in\{a,b\}$, the probabilities} 
\beq
P_{i}(r_1,\ldots,r_{n+1})
\eeq
only depend on $i$, $\added{s=}s(r)$ and $r_1$; denote this value by $\deleted{P_{i}(r_1,\ldots,r_{n+1})=}F_{i,r_1}(n,s)$. \deleted{Note of change: these were previously four separate items on the list of constraints.}
\item \replaced{For every $r\in\{0,1\}^n$,}{The equation}
\beq
P_{a}(r)+P_{a}(\overline{r})=P_{b}(r)+P_{b}(\overline{r}).
\eeq
\deleted{holds for all $r$.}
\item Using the notation
\beqa
F_{a,+}(n,s)&=&F_{a,1}(n,s)+F_{a,0}(n,s)\\
C_1(n,s)&=&\frac{1}{2}\left(F_{a,1}(n,s)-F_{a,0}(n,s)+F_{b,1}(n,s)-F_{b,0}(n,s)\right)\\
C_2(n,s)&=&\frac{1}{2}\left(F_{a,1}(n,s)-F_{a,0}(n,s)-F_{b,1}(n,s)+F_{b,0}(n,s)\right),
\eeqa
the inequality\footnote{Note that all sums are automatically absolutely convergent since $F_{\cdot,\cdot}(\cdot,\cdot)\in[0,1]$ and $\sum_{k=0}^\infty\left|\binom{1/2}{k}\right|=1<\infty$.}
\beq
\begin{split}
&\left(\sum_{k=0}^\infty(-1)^k\binom{1/2}{k}C_1(n+k-1,s+k)\right)^2\\
+&\left(\sum_{k=0}^\infty(-1)^k\binom{1/2}{k}C_2(n+k-1,s\added{-1})\right)^2\leq F_{a,+}(n,s)^2
\end{split}
\eeq
holds for every $n\in\N$ and $s\in\{1,\ldots,n-1\}$.
\item Using the coefficients
\beq
c_{n,k}=(-1)^k\binom{-1/2}{k}-(-1)^{k-n}\binom{-1/2}{k-n}
\eeq
and the quantities
\beq
\begin{split}
V_{x,\pm}(n)=&\sum_{k=0}^\infty c_{n,k}C_1(k,k)\\
&\pm\sqrt{F_{a,+}(n,n)^2-\left(\sum_{k=0}^\infty(-1)^k\binom{1/2}{k}C_2(n+k-1,n-1)\right)^2}\\
V_{z,\pm}(n)=&\sum_{k=0}^\infty c_{n,k}C_2(k,0)\\
&\pm\sqrt{F_{a,+}(n,0)^2-\left(\sum_{k=0}^\infty(-1)^k\binom{1/2}{k}C_1(n+k-1,k)\right)^2},
\end{split}
\eeq
the point in $\R^4$ given by
\beqn
\left(\sup_{n}V_{x,-}(n),\:\sup_nV_{z,-}(n),\:\inf_{n}V_{x,+}(n),\:\inf_nV_{z,+}(n)\right)
\eeqn
has to lie in the convex region $\mathcal{C}\subseteq\R^4$ characterized in proposition~\ref{C}.\footnote{In particular, the expressions under the square roots have to be non-negative and the suprema and infima have to be finite.}
\end{itemize}
\end{theorem}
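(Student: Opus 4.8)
The plan is to recast the existence of a quantum model as the solvability of a moment problem and then reduce Theorem~\ref{mainthm} to the moment-theoretic results collected in the appendix~\ref{app}. By the repeatability assumption, a quantum model is precisely a state $\omega$ on the universal $C^*$-algebra generated by two projections $p,q$ (representing $a,b$), with $P_a(r)=\omega(A_r^*A_r)$ for $r=(r_1,\dots,r_{n+1})$ and $A_r=\cdots q^{r_2}p^{r_1}$ the product of the measurement projections in order of application, and symmetrically for $P_b$. Halmos' theory of two subspaces presents every representation as a direct integral of one-dimensional representations (in which $p,q\in\{0,1\}$) and two-dimensional irreducible ones, parametrised by the angle $\theta\in(0,\tfrac\pi2)$ between $\mathrm{ran}\,p$ and $\mathrm{ran}\,q$; writing $\lambda=\cos^2\theta$, the state $\omega$ is thus the datum of a probability measure $\mu$ on $[0,1]$ --- the endpoints $\lambda=1$ and $\lambda=0$ being where $p=q$, resp.\ $p=\overline q$ --- together with a $\mu$-measurable field $\lambda\mapsto\rho_\lambda$ of $2\times2$ density matrices ($\rho_\lambda$ a probability vector on the two one-dimensional summands at the endpoints). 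Modelling the block over $\lambda$ by $p=\mathrm{diag}(1,0)$ and $q=\left(\begin{smallmatrix}\lambda&\sqrt{\lambda(1-\lambda)}\\ \sqrt{\lambda(1-\lambda)}&1-\lambda\end{smallmatrix}\right)$, one checks that each von Neumann step contributes the factor $\lambda$ for a repeated (``stay'') outcome and $1-\lambda$ for a flipped (``switch'') one, so that $P_i(r)=\int_{[0,1]}\lambda^{\,n-s}(1-\lambda)^{\,s}\,g_{i,r_1}(\lambda)\,d\mu(\lambda)$ with $s=s(r)$ and $g_{i,\epsilon}(\lambda)\in[0,1]$ an explicit affine function of the Bloch vector of $\rho_\lambda$. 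This yields the first bullet at once, and summing over $r$ and $\overline r$ with $g_{i,0}+g_{i,1}=1$ gives the second.

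Writing $Z(\lambda),X(\lambda)$ for the $\sigma_z$- and $\sigma_x$-components of the Bloch vector of $\rho_\lambda$ in the frame adapted to $p$, and putting $P=\sqrt\lambda\,Z+\sqrt{1-\lambda}\,X$, $Q=\sqrt{1-\lambda}\,Z-\sqrt\lambda\,X$ --- so that $P^2+Q^2=Z^2+X^2\le1$ $\mu$-a.e., with $P$ vanishing on the atom of $\mu$ at $\lambda=0$ and $Q$ on that at $\lambda=1$ --- a short computation gives $F_{a,+}(n,s)=\int\lambda^{n-s}(1-\lambda)^s d\mu$, $C_1(n,s)=\int\lambda^{n-s}(1-\lambda)^s\sqrt\lambda\,P\,d\mu$ and $C_2(n,s)=\int\lambda^{n-s}(1-\lambda)^s\sqrt{1-\lambda}\,Q\,d\mu$. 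Using the identities $\sum_k(-1)^k\binom{1/2}{k}(1-\lambda)^k=\sqrt\lambda$ and $\sum_k(-1)^k\binom{1/2}{k}\lambda^k=\sqrt{1-\lambda}$, the two inner sums in the third bullet collapse to $\int\lambda^{n-s}(1-\lambda)^s P\,d\mu$ and $\int\lambda^{n-s}(1-\lambda)^s Q\,d\mu$, and the asserted inequality is then Cauchy--Schwarz, $\left(\int fP\,d\mu\right)^2+\left(\int fQ\,d\mu\right)^2\le\left(\int f\,d\mu\right)\left(\int f(P^2+Q^2)\,d\mu\right)\le\left(\int f\,d\mu\right)^2$ with $f=\lambda^{n-s}(1-\lambda)^s\ge0$; the constraint $s\in\{1,\dots,n-1\}$ makes $f$ vanish at both endpoints, so these test functions --- which concentrate at $\lambda_0$ as $n,s\to\infty$ with $s/n\to1-\lambda_0$ --- probe $P^2+Q^2\le1$ only on the open interval $(0,1)$.

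The remaining boundary rays $s=0$ and $s=n$ are exactly the fourth bullet: the coefficients $c_{n,k}$ have generating function $(1-x^n)/\sqrt{1-x}$, so the same manipulation rewrites $V_{x,\pm}(n)$ as $\int\bigl(1-(1-\lambda)^n\bigr)P\,d\mu\pm\sqrt{\bigl(\int(1-\lambda)^n d\mu\bigr)^2-\bigl(\int(1-\lambda)^n Q\,d\mu\bigr)^2}$ and $V_{z,\pm}(n)$ as its mirror image under $\lambda\leftrightarrow1-\lambda$, $P\leftrightarrow Q$; taking $\sup_n$ and $\inf_n$ isolates the behaviour of $(\mu,\rho_\lambda)$ near the two endpoints, and proposition~\ref{C} characterises the convex set $\mathcal C$ of data arising in this way. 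For the converse, assume the probabilities are admissible (relations~\eqref{conserved}) and satisfy the four bullets. Then \eqref{conserved} makes $\{P_a\}$ a consistent family of finite-dimensional distributions of a process $X_0,X_1,\dots$ of alternating outcomes; by the first bullet the switch-increments $X_k\oplus X_{k-1}$ are exchangeable conditionally on $X_0$, so de~Finetti's theorem (equivalently, the Hausdorff moment problem) produces $\mu$, the second bullet makes it independent of whether one starts from $P_a$ or $P_b$, and positivity of the $P_i$ makes $C_1,C_2$ the moment sequences of signed measures absolutely continuous with respect to $\mu$. Dividing their densities by $\sqrt\lambda$ and $\sqrt{1-\lambda}$ --- legitimate precisely because the fourth bullet excludes atoms of those signed measures at $\lambda=0$, resp.\ $\lambda=1$ --- recovers $P$ and $Q$, and bullets three (interior) and four (endpoints) yield $P^2+Q^2\le1$ $\mu$-a.e.\ by the moment theorems of the appendix; reassembling $\rho_\lambda$ from $Z=\sqrt\lambda P+\sqrt{1-\lambda}Q$, $X=\sqrt{1-\lambda}P-\sqrt\lambda Q$ and an arbitrary third Bloch component completing to a density matrix gives a state $\omega$ reproducing the given probabilities by Hausdorff determinacy.

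The crux, and where essentially all of the appendix's work is invested, is the endpoint analysis underlying the fourth bullet: near $\lambda=0,1$ the densities $P,Q$ are recovered by dividing moment data by the vanishing weights $\sqrt\lambda,\sqrt{1-\lambda}$, so bounding them by $1$ is a genuine moment problem rather than a Cauchy--Schwarz estimate, and it is this that forces the four-dimensional convex constraint $\mathcal C$. A secondary technical burden is the measurable-field / direct-integral bookkeeping needed to turn $\lambda\mapsto\rho_\lambda$ into an honest Hilbert-space model and to make the de~Finetti reconstruction rigorous.
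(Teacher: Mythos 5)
Your necessity direction is, up to packaging, the paper's own argument: the paper invokes the Raeburn--Sinclair description of $C^*(a,b)$ as $M_2(\C)$-valued functions on $[0,1]$ (your Halmos direct integral is the same structure), derives $P_i(r)=\rho\left(t^{n-s}(1-t)^s\,i^{r_1}\right)$ from the reduction formulas $aba=ta$ etc., rotates to the orthonormal frame $\vec n_1(t),\vec n_2(t)$ exactly as you define $P,Q$, and obtains the third and fourth bullets by binomial resummation of the half-integer moments; your Cauchy--Schwarz estimate replaces the paper's observation that $c\,\mathbbm{1}_2+r\,\sigma_x+s\,\sigma_z$ is positive iff $r^2+s^2\leq c^2$ and $c\geq 0$, which is equivalent. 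The real divergence, and the only place where your argument is incomplete, is the converse. The paper funnels it entirely through theorems~\ref{ncmp1} and~\ref{ncmp2}: the state is built as a Hahn--Banach extension of the weak limit of Bernstein-polynomial approximants, and the two integer moments $M_x(0,0)$, $M_z(0,0)$ that the probabilities do not determine are chosen explicitly using membership of the relevant point of $\R^4$ in $\mathcal{C}$; no densities are ever produced and no pointwise bounds are ever needed.

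Your reconstruction instead produces $\mu$ (de~Finetti / Hausdorff --- fine), then signed measures $\nu_1,\nu_2\ll\mu$ with moments $C_1,C_2$, divides their densities by $\sqrt{t}$ and $\sqrt{1-t}$, and asserts $P^2+Q^2\leq 1$ $\mu$-a.e. Two steps here are asserted rather than proved. First, passing from the countable family of integral inequalities in the third bullet to the $\mu$-a.e.\ pointwise bound requires a concentration/differentiation argument for the normalized kernels $t^{n-s}(1-t)^s$ against an arbitrary (possibly atomic or singular) $\mu$; this is precisely the analytic content that the paper's Bernstein-approximant construction is designed to bypass. Second, the stated role of the fourth bullet --- that it ``excludes atoms of those signed measures at $\lambda=0$, resp.\ $\lambda=1$'' --- is not the right mechanism: $\nu_1=\sqrt{t}\,P\,d\mu$ has no atom at $t=0$ automatically. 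What the fourth bullet actually certifies is that the two moments $\rho(\sigma_x)$ and $\rho(\sigma_z)$, which the outcome probabilities cannot see because $(1-t)^n\sigma_x$ and $t^n\sigma_z$ do not lie in the subalgebra generated by the two projections, admit a simultaneous choice compatible with all the constraints $M_x(n,s)^2+M_z(n,s)^2\leq M_1(n,s)^2$; the intervals $[V_{x,-}(n),V_{x,+}(n)]$ and $[V_{z,-}(n),V_{z,+}(n)]$ together with the region $\mathcal{C}$ of proposition~\ref{C} encode exactly this. If you repair these two points (or simply quote theorem~\ref{ncmp2} for the converse, as the paper does), the proof goes through.
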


To begin the proof of this theorem, let $\mathcal{A}_2=C^*(a,b)$ be the $C^*$-algebra freely generated by two projections $a$ and $b$. Then for every quantum-mechanical model of the system, we obtain a unique $C^*$-algebra homomorphism
\beq
\mathcal{A}_2\lra\mathcal{B}(\mathcal{H})
\eeq
which maps the universal projections to concrete projections on $\mathcal{H}$. \deleted{Then} Upon pulling back the black \replaced{box's}{boxes'} initial state $|\psi\rangle$ to a $C^*$-algebraic state on $\mathcal{A}_2$, we can calculate all outcome probabilities via algebraic quantum mechanics on $\mathcal{A}_2$. Conversely, any $C^*$-algebraic state on $\mathcal{A}_2$ defines a quantum-mechanical model of the two \replaced{dichotomic}{binary} observables system by virtue of the GNS construction. Therefore, we will do all further considerations on $\mathcal{A}_2$. In this sense, the states on $\mathcal{A}_2$ are the universal instances of quantum black boxes figure~\ref{bb}.

$\mathcal{A}_2$ is known~\cite{RS} to be of the form
\beq
\mathcal{A}_2\cong\left\{f:[0,1]\stackrel{\added{\textrm{cont.}}}{\lra} M_2(\C)\:\big|\:f(0),\,f(1)\textrm{ are diagonal}\right\}
\eeq
where the universal pair of projections is given by
\beqa
a(t)&=&\left(\begin{array}{cc}1&0\\0&0\end{array}\right)=\frac{\mathbbm{1}_2+\sigma_z}{2}\\
b(t)&=&\left(\begin{array}{cc}t&\sqrt{t(1-t)}\\\sqrt{t(1-t)}&1-t\end{array}\right)=\frac{1}{2}\mathbbm{1}_2+\sqrt{t(1-t)}\,\sigma_x+\left(t-\frac{1}{2}\right)\sigma_z
\eeqa
By the Hahn-Banach extension theorem, the set of states on $\mathcal{A}_2$ can be identified with the set of functionals obtained by restricting the states on the full \deleted{matrix} algebra \added{of matrix-valued continuous functions} $\mathscr{C}\left([0,1],M_2(\C)\right)$ to the subalgebra $\mathcal{A}_2$. Hence for the purposes of the proof of theorem~\ref{mainthm}, there is no need to distinguish between $\mathcal{A}_2$ and $\mathscr{C}\left([0,1],M_2(\C)\right)$.

Now consider a sequence of $n+1$ sequential measurements having the form $a,b,a,\ldots$. The set of outcomes for all measurements taken together is given by the set $\{0,1\}^{n+1}$ of \replaced{dichotomic}{binary} \replaced{strings}{sequences} $r=\left(r_i\right)_{i=\replaced{1}{0}}^{n\added{+1}}$. Every such outcome $r$ has an associated Kraus operator which is given by
\beqn
\label{kraus}
H_r=a^{r_{\replaced{1}{0}}}b^{r_{\replaced{2}{1}}}a^{r_{\replaced{3}{2}}}\ldots
\eeqn
where the superscripts indicate whether one has to insert the projection $a$ or $b$ itself or its orthogonal complement $\overline{a}=1-a$ or $\overline{b}=1-b$, respectively. Then the probability of obtaining \replaced{the string $r$ as an}{that} outcome is given by the expression
\beqn
\label{outcomeprob}
\begin{split}
P_a\left(r_{\replaced{1}{0}},\deleted{r_1,}\ldots,r_{n\added{+1}}\right)&=\rho\left(H_rH_r^\dagger\right)\\
&=\rho\left(a^{r_{\replaced{1}{0}}}b^{r_{\replaced{2}{1}}}a^{r_{\replaced{3}{2}}}\ldots a^{r_{\replaced{3}{2}}}b^{r_{\replaced{2}{1}}}a^{r_{\replaced{1}{0}}}\right)
\end{split}
\eeqn
Now follows the main observation which facilitates all further calculations.

\begin{lemma}
We have the following reduction formulas in $\mathcal{A}_2$:
\beqa
\begin{array}{lll}
aba=ta,&\quad& bab=tb\\
a\overline{b}a=(1-t)a,&\quad& b\overline{a}b=(1-t)b\\
\overline{a}b\overline{a}=(1-t)\overline{a},&\quad&\overline{b}a\overline{b}=(1-t)\overline{b}\\
\overline{a}\overline{b}\overline{a}=t\overline{a},&\quad&\overline{b}\overline{a}\overline{b}=t\overline{b}
\end{array}
\eeqa
\end{lemma}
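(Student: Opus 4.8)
The plan is to read all eight identities off the concrete realization of $\mathcal{A}_2$ recalled above. Since $\mathcal{A}_2$ is an algebra of continuous $M_2(\C)$-valued functions on $[0,1]$, each asserted equation is an equation of such functions; as both sides are continuous it suffices to check it pointwise for $t$ in the dense set $(0,1)$, and the diagonal endpoints $t=0,1$ then follow for free. For $t\in(0,1)$ the matrices $a(t)=\bigl(\begin{smallmatrix}1&0\\0&0\end{smallmatrix}\bigr)$, $\overline a(t)=\bigl(\begin{smallmatrix}0&0\\0&1\end{smallmatrix}\bigr)$, $b(t)$, and $\overline b(t)=1-b(t)$ are all rank-one projections: writing $|e_1\rangle,|e_2\rangle$ for the standard basis of $\C^2$, one has $a=|e_1\rangle\langle e_1|$, $\overline a=|e_2\rangle\langle e_2|$, $b=|w\rangle\langle w|$ with $|w\rangle=\sqrt t\,|e_1\rangle+\sqrt{1-t}\,|e_2\rangle$, and $\overline b=|w^\perp\rangle\langle w^\perp|$ with $|w^\perp\rangle=\sqrt{1-t}\,|e_1\rangle-\sqrt t\,|e_2\rangle$.

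The only input needed is that for rank-one projections $p=|u\rangle\langle u|$ and $q=|v\rangle\langle v|$ one has $pqp=|\langle u|v\rangle|^2\,p$. Applying this to the four vectors above, together with $|\langle e_1|w\rangle|^2=t$, $|\langle e_2|w\rangle|^2=1-t$, $|\langle e_1|w^\perp\rangle|^2=1-t$ and $|\langle e_2|w^\perp\rangle|^2=t$, yields at once the left column $aba=ta$, $a\overline b a=(1-t)a$, $\overline a b\overline a=(1-t)\overline a$, $\overline a\overline b\overline a=t\overline a$; and the right column $bab=tb$, $b\overline a b=(1-t)b$, $\overline b a\overline b=(1-t)\overline b$, $\overline b\overline a\overline b=t\overline b$ follows identically after exchanging the roles of the $a$- and $b$-bases. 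Equivalently, the right column is the image of the left under the $*$-automorphism of $\mathcal{A}_2=C^*(a,b)$ that swaps the two universal projections, and within each column the over-lined identities are the images of the plain ones under the automorphism $a\mapsto\overline a$, $b\mapsto\overline b$; both of these automorphisms fix the central coordinate function $t$, as is clear from the fibrewise picture. A reader who prefers brute force may instead simply multiply out the eight triple products of explicit $2\times2$ matrices; in every case the off-diagonal entries $\pm\sqrt{t(1-t)}$ of $b$ and $\overline b$ are annihilated by the outer factors.

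There is no real obstacle here: the entire content of the lemma lies in the structural description of $\mathcal{A}_2$ as ``$M_2$ over $[0,1]$ with $a$ and $b$ of rank one'', which has already been established, after which the reduction formulas are a one-line consequence. The only points that merit a second glance --- and they are minor --- are the behaviour at the degenerate endpoints (automatic by continuity, or checked by hand: at $t=0$ one has $b=\overline a$, and indeed $a\overline a a=0=0\cdot a$), and, if one uses the symmetry argument, confirming that the two relevant automorphisms genuinely fix $t$ rather than sending it to $1-t$, which again is immediate from the pointwise model.
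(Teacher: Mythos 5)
Your proof is correct and takes essentially the same route as the paper, whose entire proof reads ``Direct calculation'': you simply carry out that calculation explicitly in the concrete model of $\mathcal{A}_2$, via the rank-one identity $pqp=|\langle u|v\rangle|^2p$ applied to $a=|e_1\rangle\langle e_1|$ and $b=|w\rangle\langle w|$. All eight coefficients check out, and the remarks on the endpoints and the symmetry repackaging are fine but not needed.
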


\begin{proof}
Direct calculation.
\end{proof}

As a consequence, one finds that the measurement outcome probabilities\re{outcomeprob} have the form
\beq
P_a\left(r_{\replaced{1}{0}},\deleted{r_1,}\ldots,r_{n\added{+1}}\right)=\rho\left(t^{n-s}(1-t)^s a^{r_1}\right)
\eeq
where $s$ is the number of switches in the \replaced{dichotomic}{binary} \replaced{string}{sequence} $r_0,\ldots,r_n$;
the same clearly applies to the $P_b$'s that determine the outcome
probabilities for the measurement sequence $b,a,b,\ldots$. Hence, one necessary
condition on the probabilities is the following:

\begin{proposition}
The probabilities $P_a\left(r_{\replaced{1}{0}},\deleted{r_1,}\ldots,r_{n\added{+1}}\right)$ only depend on the number of switches contained in the \replaced{dichotomic}{binary} sequence $r_{\replaced{1}{0}},\ldots,r_{n\added{+1}}$. The same holds for the $P_b\left(r_{\replaced{1}{0}},\deleted{r_1,}\ldots,r_{n\added{+1}}\right)$.
\end{proposition}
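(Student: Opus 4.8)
The plan is to obtain, for every $r\in\{0,1\}^{n+1}$, the closed form $P_a(r_1,\ldots,r_{n+1})=\rho\!\left(t^{n-s}(1-t)^s\,a^{r_1}\right)$ with $s=s(r)$ that is recorded in the paragraph above; once this is available the proposition follows at once, because the algebra element $t^{n-s}(1-t)^s a^{r_1}\in\mathcal{A}_2$, hence also the number $\rho$ assigns to it, depends on the string $r$ only through its length $n+1$, its switch count $s$, and its first letter $r_1$ (the statement ``only depends on the number of switches'' being read with the length held fixed), and the same argument applies verbatim to $P_b$. So the real work is deriving that formula from the reduction lemma.

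First I would write the outcome probability~\re{outcomeprob} as $P_a(r)=\rho\!\left(H_rH_r^\dagger\right)$ and exploit that every factor $a^{r_i},b^{r_i}$ in the Kraus operator~\re{kraus} is a self-adjoint projection: then $H_r^\dagger$ is $H_r$ read backwards, the central factor squares to itself, and $H_rH_r^\dagger$ acquires the palindromic normal form
\beq
H_rH_r^\dagger=M_1M_2\cdots M_n\,M_{n+1}\,M_n\cdots M_2M_1,
\eeq
a word of length $2n+1$ with $M_i=a^{r_i}$ for odd $i$ and $M_i=b^{r_i}$ for even $i$. The point that makes the recursion go is that consecutive letters of this word always alternate between the two projections $a$ and $b$, so its innermost length-three subword $M_nM_{n+1}M_n$ has the shape $x^{\varepsilon}y^{\delta}x^{\varepsilon}$ with $\{x,y\}=\{a,b\}$, and precisely one of the eight identities of the reduction lemma applies to it.

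Next I would induct on $n$. The lemma rewrites $M_nM_{n+1}M_n$ as $\lambda_n M_n$, and inspection of the eight cases shows $\lambda_n=t$ when $r_n=r_{n+1}$ and $\lambda_n=1-t$ when $r_n\neq r_{n+1}$, i.e.\ a factor $1-t$ is produced exactly at a switch. What is left is $\lambda_n\cdot M_1\cdots M_{n-1}M_nM_{n-1}\cdots M_1=\lambda_n\cdot H_{r'}H_{r'}^\dagger$ with $r'=(r_1,\ldots,r_n)$, which is the same situation for a string one letter shorter; so the induction hypothesis gives $H_{r'}H_{r'}^\dagger=t^{(n-1)-s'}(1-t)^{s'}a^{r_1}$ with $s'=s(r')$, and since $s(r)=s'+[\,r_n\neq r_{n+1}\,]$, multiplication by $\lambda_n$ yields exactly $t^{n-s}(1-t)^s a^{r_1}$. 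The base case $n=0$ is the trivial identity $a^{r_1}a^{r_1}=a^{r_1}$ with $s=0$. Applying $\rho$ to both sides of the resulting identity in $\mathcal{A}_2$ gives the formula, hence the proposition; for $P_b$ one runs the same induction using the right-hand column of the lemma.

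I do not expect a genuine obstacle. The only thing requiring care is the bookkeeping of the induction: checking that the alternation of letter-types keeps the lemma applicable at every stage so that the recursion never stalls, and that the scalar produced at the step involving the pair $(r_j,r_{j+1})$ is $t$ or $1-t$ according to whether $r_j=r_{j+1}$, so that the accumulated exponents come out to be precisely $n-s$ and $s$. Everything else is automatic: the order in which the reductions are performed is irrelevant because these are honest equalities in the fixed algebra $\mathcal{A}_2$, and no multiplicativity of $\rho$ is needed since $\rho$ is evaluated just once, on the fully reduced element.
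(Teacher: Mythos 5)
Your proposal is correct and is essentially the paper's own argument: the paper derives the identity $H_rH_r^\dagger=t^{n-s}(1-t)^s\,a^{r_1}$ as a ``consequence'' of the same reduction lemma, merely leaving the palindrome-collapsing induction implicit, and you have filled in exactly that bookkeeping (including the correct correspondence between switches and factors of $1-t$). The only addition in the paper is a remark offering a second justification via proposition~\ref{genpreorthpost}(b), namely that moving the position of a single switch by one step does not change the probability; your route does not need it.
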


A particular instance of this is equation\re{aba}.

\begin{remark}
Moreover, this observation is actually a \emph{consequence} of the conditional statement of proposition~\ref{genpreorthpost}(b). Due to that result, it is clear that the equations
\beq
P_a(r_{\replaced{1}{0}},\ldots,r_k,0,0,1,r_{k+3},\ldots,r_{n\added{+1}})=P_a(r_{\replaced{1}{0}},\ldots,r_k,0,1,1,r_{k+3},\ldots,r_{n\added{+1}})
\eeq
\beq
P_a(r_{\replaced{1}{0}},\ldots,r_k,1,0,0,r_{k+3},\ldots,r_{n\added{+1}})=P_a(r_{\replaced{1}{0}},\ldots,r_k,1,1,0,r_{k+3},\ldots,r_{n\added{+1}})
\eeq
hold. In words: the outcome probability does not change if the position of a switch in the binary string is moved by one. On the other hand, \deleted{it is clear that} any two binary sequences with the same number of switches can be transformed into each other by subsequently moving the position of each switch by one.
\end{remark}

Since the dependence on the sequence $r$ is only via its length $n+1$, the number of switches $s$, and the initial outcome $r_1$, mention of $r$ will be omitted from now on. Instead, \added{the dependence on $r$ will be retained by considering} all expressions \deleted{should be considered} as functions of $n$, $r_1$ and $s$, with $s\in\{0,\ldots,n\}$. The two possible values of the initial outcome $r_1$ as well as the initial type of measurement are indicated by subscripts:
\beqa
P_a(0,r_{\replaced{2}{1}},\ldots,r_{n\added{+1}})=F_{a,0}(n,s)\\
P_a(1,r_{\replaced{2}{1}},\ldots,r_{n\added{+1}})=F_{a,1}(n,s)\\
P_b(0,r_{\replaced{2}{1}},\ldots,r_{n\added{+1}})=F_{b,0}(n,s)\\
P_b(1,r_{\replaced{2}{1}},\ldots,r_{n\added{+1}})=F_{b,1}(n,s)
\eeqa
By the present results, the four functions $F_{\cdot,\cdot}$ can be written as
\beqa
F_{a,1}(n,s)=\rho\left(t^{n-s}(1-t)^sa\right)\\
F_{a,0}(n,s)=\rho\left(t^{n-s}(1-t)^s\overline{a}\right)\\
F_{b,1}(n,s)=\rho\left(t^{n-s}(1-t)^sb\right)\\
F_{b,0}(n,s)=\rho\left(t^{n-s}(1-t)^s\overline{b}\right)
\eeqa
\replaced{But actually}{Though} instead of using these sequences of probabilities, the patterns are easier to spot \added{when} using \deleted{instead} the new variables
\beqa
F_{a,+}(n,s)\equiv F_{a,1}(n,s)+F_{a,0}(n,s),&\quad& F_{a,-}(n,s)\equiv F_{a,1}(n,s)-F_{a,0}(n,s)\\
F_{b,+}(n,s)\equiv F_{b,1}(n,s)+F_{b,0}(n,s),&\quad& F_{b,-}(n,s)\equiv F_{b,1}(n,s)-F_{b,0}(n,s)
\eeqa
In these terms, we can write the four equations as
\beq
\begin{split}
F_{a,+}(n,s)&=\rho\left(t^{n-s}(1-t)^s\right)\\
F_{b,+}(n,s)&=\rho\left(t^{n-s}(1-t)^s\right)\\
F_{a,-}(n,s)&=\rho\left(t^{n-s}(1-t)^s\sigma_z\right)\\
F_{b,-}(n,s)&=\rho\left(t^{n-s}(1-t)^s\left[2\sqrt{t(1-t)}\,\sigma_x+\left(2t-1\right)\sigma_z\right]\right)
\end{split}
\eeq
Therefore, it is clear that another necessary constraint is that 
\beq
F_{a,+}(n,s)=F_{b,+}(n,s)\quad\forall n,s
\eeq
In terms of the probabilities, this translates into
\beq
\fbox{$P_a(r)+P_a(\overline{r})=P_b(r)+P_b(\overline{r})$}
\eeq
The first non-trivial instance of this occurs for the case $n=1$, where we have the equations
\beqa
P_a(0,0)+P_a(1,1)&=&P_b(0,0)+P_b(1,1)\\
P_a(0,1)+P_a(1,0)&=&P_b(0,1)+P_b(1,0)
\eeqa
which also have been noted in~\cite[p. 257/8]{AS}.

Finally, let us try to extract the conditions that need to be satisfied by the $F_{a,-}$ and $F_{b,-}$. Considering the form of the equations, it seems convenient to introduce the quantities
\beqa
C_1(n,s)&\equiv& \frac{1}{2}\left(F_{a,-}(n,s)+F_{b,-}(n,s)\right)\\
&=&\frac{1}{2}\left(F_{a,1}(n,s)-F_{a,0}(n,s)+F_{b,1}(n,s)-F_{b,0}(n,s)\right)\\
C_2(n,s)&\equiv&\frac{1}{2}\left(F_{a,-}(n,s)-F_{b,-}(n,s)\right)\\
&=&\frac{1}{2}\left(F_{a,1}(n,s)-F_{a,0}(n,s)-F_{b,1}(n,s)+F_{b,0}(n,s)\right)
\eeqa
which are \replaced{somewhat}{highly} reminiscient of the CHSH correlations. In these terms,
\beqa
C_1(n,s)&=&\rho\left(t^{n-s}(1-t)^s\underbrace{\left[\sqrt{t(1-t)}\,\sigma_x+t\sigma_z\right]}_{\vec{v}_1(t)\cdot\vec{\sigma}}\right)\\
C_2(n,s)&=&\rho\left(t^{n-s}(1-t)^s\underbrace{\left[-\sqrt{t(1-t)}\,\sigma_x+(1-t)\sigma_z\right]}_{\vec{v}_2(t)\cdot\vec{\sigma}}\right)\\
\eeqa
The reason that this is nicer is because now, the two vectors $\vec{v}_1(t)$, $\vec{v}_2(t)$, are orthogonal for each $t$. Finally, $\vec{v}_1(t)$ and $\vec{v}_2(t)$ can be normalized to get
\beqa
C_1(n,s)&=&\rho\left(t^{n-s+1/2}(1-t)^s\,\vec{n}_1(t)\cdot\vec{\sigma}\right)\\
C_2(n,s)&=&\rho\left(t^{n-s}(1-t)^{s+1/2}\,\vec{n}_2(t)\cdot\vec{\sigma}\right)
\eeqa
with vectors $\vec{n}_1(t)$, $\vec{n}_2(t)$, that are normalized and orthogonal for each $t$. Using an appropriate automorphism of \added{$\mathscr{C}([0,1],M_2(\C))$}\deleted{$\mathcal{A}_2$} given by conjugation with a $t$-dependent unitary $U(t)\in SU(2)$, the vectors $\vec{n}_i(t)$ can be rotated in such a way that they coincide with the standard basis vectors $\vec{e}_x$ and $\vec{e}_z$, constant as functions of $t$.

Then, theorem~\ref{mainthm} is a consequence of theorem~\ref{ncmp2} as applied to
\beqa
M'_1(n,s)&=&F_{a,+}(n,s)\\
M'_x(n,s)&=&C_1(n,s)\\
M'_z(n,s)&=&C_2(n,s).
\eeqa

\section{Determining the quantum region in truncations}
\label{truncations}

In actual experiments, only a finite number of the probabilities can be measured. Also, these can realistically only be known up to finite precision due to finite statistics. An even more problematic issue is that perfect von Neumann measurements are impossible to realize and can only be approximated. Here, we ignore the latter two problems and focus on the issue that only a finite number of probabilities are known.

\begin{question}
\label{qtrunc}
Given numerical values for a finite subset of the probabilities $P_\cdot(\cdot)$, how can one decide whether a quantum-mechanical \replaced{representation of}{model reproducing} these probabilities exists?
\end{question}

Clearly, such a \replaced{representation}{model} exists if and only if these probabilities can be extended to a specification of \emph{all} outcome probabilities $P_a$ and $P_b$ satisfying the conditions given in theorem~\ref{mainthm}. However, this observation doesn't seem very useful---how might one decide whether such an extension exists? The problem is that the projection of a convex set (the quantum region) from an \replaced{infinite}{higher}-dimensional vector space down to a \replaced{finite}{lower}-dimensional one can be notoriously hard to \replaced{compute}{describe}.

\replaced{Question~\ref{qtrunc}}{This problem} is a close relative of the truncated Hausdorff moment problem (see e.g.~\cite[ch. III]{Wi}). In a finite truncation of the Hausdorff moment problem, the allowed region coincides with the convex hull of the moments of the Dirac measures, which are exactly the extreme points in the space of measures. Therefore, the allowed region is the convex hull of an algebraic curve embedded in Euclidean space.

In the present case, it is possible to follow an analogous strategy of first determining the extreme points in the set of states---that is, the pure states on the algebra---\added{and} then calculat\added{ing} the corresponding points in the truncation, and finally tak\added{ing} the convex hull of this set of points. To begin this program, note that the pure states on the algebra are exactly those of the form
\beq
\mathscr{C}\left([0,1],M_2(\C)\right)\lra\C,\quad f\mapsto\langle\psi|f(t_0)|\psi\rangle
\eeq
where $t_0\in[0,1]$ is fixed, and $|\psi\rangle$ stands for some fixed unit vector in $\C^2$; this corresponds to integration with respect to a projection-valued Dirac measure on $[0,1]$. \deleted{Furthermore, since all the algebra elements of interest lie in the subalgebra given by the $M_2(\R)$-valued functions -- as opposed to $M_2(\C)$-valued -- the phases of the components of $|\psi\rangle$ are arbitrary, and can therefore be chosen such that $|\psi\rangle$ has real-valued components. This argument shows that} \added{Since global phases are irrelevant,} $|\psi\rangle$ can be assumed to be given by
\beq
|\psi\rangle = \left(\begin{array}{c}\cos\theta\\e^{i\lambda}\sin\theta\end{array}\right).
\eeq
In conclusion, the pure states are parametrized by the numbers $t_0\in[0,1]$\added{, $\lambda\in[0,2\pi]$} and $\theta\in[0,2\pi]$. In any given truncation, this determines an algebraic \replaced{variety}{surface}, whose convex hull coincides with the quantum region in that truncation. This reduces the problem\added{~\ref{qtrunc}} to the calculation of the convex hull of an algebraic \replaced{variety}{surface} embedded in Euclidean space.

The following theorem is concerned with the infinite-dimensional truncation to all $P_a$, which means that one simply disregards all probabilities $P_b$ while keeping the $P_a$.

\begin{theorem}
\label{atrunc}
A quantum-mechanical \replaced{representation}{model} in the $P_a$ truncation exists for an admissible probability assignment if and only if $P_a(r)$ only depends on $s(r)$.
\end{theorem}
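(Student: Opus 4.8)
\emph{Necessity.} For this direction I would simply invoke the computation already carried out in Section~\ref{quantum}: given any quantum-mechanical model, that is, any state $\rho$ on $\mathcal{A}_2$, the reduction-formula lemma above yields
\beq
P_a(r_1,\ldots,r_{n+1})=\rho\!\left(t^{\,n-s}(1-t)^{s}\,a^{r_1}\right),\qquad s=s(r),
\eeq
so that $P_a(r)$ depends on $r$ only through the coarse data recorded by $r_1$, the length $n+1$ and $s(r)$; equivalently $P_a(r)=F_{a,r_1}(n,s)$, which is precisely the first item of Theorem~\ref{mainthm} restricted to $i=a$. Hence a model can exist only if $P_a$ has this form.

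\emph{Sufficiency, reduction to a scalar moment problem.} Now assume conversely that $P_a(r)=F_{a,r_1}(n,s)$, with $F_{a,0},F_{a,1}\ge 0$ and (as part of the data being a genuine probability assignment) $F_{a,0}(0,0)+F_{a,1}(0,0)=1$. The first step is to read the conservation laws~\re{conserved} in these variables: appending a bit to a string of length $n+1$ that starts with $r_1$ and has $s$ switches produces two strings of length $n+2$, again starting with $r_1$, one with $s$ and one with $s+1$ switches; hence~\re{conserved} becomes the Pascal recursion
\beq
F_{a,i}(n,s)=F_{a,i}(n+1,s)+F_{a,i}(n+1,s+1),\qquad i\in\{0,1\}.
\eeq
Running this recursion downward expresses each $F_{a,i}(n,s)$ as an alternating finite difference of the slice $m^{(i)}_n:=F_{a,i}(n,0)$, so that non-negativity of all the $F_{a,i}(n,s)$ with $0\le s\le n$ is exactly complete monotonicity of the sequences $(m^{(i)}_n)_{n}$. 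By the Hausdorff moment theorem (appendix~\ref{app}, cf.~\cite{Wi}) there are then positive Borel measures $\mu_0,\mu_1$ on $[0,1]$ with $F_{a,i}(n,s)=\int_0^1 t^{\,n-s}(1-t)^{s}\,d\mu_i(t)$, and $\mu_0([0,1])+\mu_1([0,1])=1$.

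\emph{Sufficiency, assembling a state.} It remains to combine $\mu_0$ and $\mu_1$ into a single state. Using $\mathcal{A}_2\cong\mathscr{C}\left([0,1],M_2(\C)\right)$, I would set
\beq
\rho(f):=\int_0^1 f_{11}(t)\,d\mu_1(t)+\int_0^1 f_{22}(t)\,d\mu_0(t).
\eeq
This is linear, $\rho(\mathbbm{1})=\mu_1([0,1])+\mu_0([0,1])=1$, and positive, because $f(t)\ge 0$ in $M_2(\C)$ forces $f_{11}(t),f_{22}(t)\ge 0$; hence $\rho$ is a state, and its restriction to $\mathcal{A}_2$ produces a quantum-mechanical model via the GNS construction. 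Since $a$ is represented in $\mathcal{A}_2$ by the \emph{constant} projection $\mathrm{diag}(1,0)$, this model satisfies $\rho(t^{\,n-s}(1-t)^s a)=F_{a,1}(n,s)$ and $\rho(t^{\,n-s}(1-t)^s\overline{a})=F_{a,0}(n,s)$, and therefore reproduces $P_a(r)=\rho(t^{\,n-s}(1-t)^s a^{r_1})=F_{a,r_1}(n,s)$ for all $r$. This completes the plan.

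\emph{Where the difficulty is.} The point worth emphasising is that there is \emph{no} genuine obstacle, and this is a special feature of the $P_a$-truncation: in the canonical model $a$ is a constant diagonal projection, so the probabilities $P_a$ only see the two diagonal slots of a state and are blind to the $\sigma_x,\sigma_z$ directions that do not appear in $a$. Consequently the operator-valued noncommutative moment problem behind Theorem~\ref{mainthm} collapses here into two \emph{independent} scalar Hausdorff moment problems, whose only solvability condition---complete monotonicity---is already forced by admissibility together with non-negativity of the probabilities. The only things requiring care are the bookkeeping that identifies~\re{conserved} plus non-negativity with complete monotonicity, so that Hausdorff's theorem applies verbatim, and the verification that the ad hoc functional $\rho$ above is positive on the full $C^*$-algebra, which succeeds precisely because no positivity constraint couples $\mu_0$ and $\mu_1$.
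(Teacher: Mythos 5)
Your proposal is correct, and both directions are sound: the necessity argument is the same as the paper's (quote the reduction $P_a(r)=\rho\bigl(t^{n-s}(1-t)^s a^{r_1}\bigr)$ from the proof of Theorem~\ref{mainthm}), but your sufficiency argument takes a genuinely different and more self-contained route. The paper simply feeds $M_1=F_{a,+}$, $M_x=0$, $M_z=F_{a,-}$ into its matrix-valued moment theorem~\ref{ncmp1} and observes that the only nontrivial hypothesis, $|F_{a,-}|\le F_{a,+}$, is automatic from non-negativity of probabilities. You instead bypass Theorem~\ref{ncmp1} entirely: you turn admissibility into the Pascal recursion for $F_{a,i}$, identify non-negativity of all $F_{a,i}(n,s)=(-1)^s\Delta^s F_{a,i}(n-s,0)$ with complete monotonicity of the two sequences $F_{a,i}(\cdot,0)$, apply the classical Hausdorff moment theorem to get measures $\mu_0,\mu_1$, and assemble the diagonal state $\rho(f)=\int f_{11}\,d\mu_1+\int f_{22}\,d\mu_0$. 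The two constructions are equivalent ($\mu_1\pm\mu_0$ is exactly the spectral data behind the pair $(M_1,M_z)$ with $M_x=0$), but yours needs only the scalar Hausdorff theorem rather than the full appendix machinery, and your closing remark pinpoints the conceptual reason the truncation is unobstructed: $a$ is a constant diagonal projection, so $P_a$ sees only the two diagonal slots of the state and no positivity constraint couples $\mu_0$ to $\mu_1$. The paper's version is shorter only because Theorem~\ref{ncmp1} has already been proved for the main theorem. One cosmetic point: like the paper's own proof, you read the condition ``$P_a(r)$ only depends on $s(r)$'' as dependence on $(n,s(r),r_1)$ only; that is the reading the formula $\rho\bigl(t^{n-s}(1-t)^s a^{r_1}\bigr)$ actually supports, so your interpretation is the right one.
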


\begin{proof}
It follows from the main theorem\re{mainthm} that this condition is necessary. To see that it is sufficient, recall the equations
\beqa
F_{a,+}(n,s)&=\rho\left(t^{n-s}(1-t)^s\right)\\
F_{a,-}(n,s)&=\rho\left(t^{n-s}(1-t)^s\sigma_z\right),
\eeqa
\added{which have been used in the proof of theorem~\ref{mainthm}.} Then upon choosing $M_1(n,s)=F_{a,+}(n,s)$, $M_x(n,s)=0$ and $M_z(n,s)=F_{a,-}(n,s)$, theorem~\ref{ncmp1} applies and shows that such a state $\rho$ can be found as long as the condition
\beq
|F_{a,-}(n,s)|\leq F_{a,+}(n,s)
\eeq
holds. In terms of the probabilities, this requirement means
\beq
\left|F_{a,1}(n,s)-F_{a,0}(n,s)\right|\leq F_{a,1}(n,s)+F_{a,0}(n,s),
\eeq
which always holds trivially since all probabilities are non-negative. This ends the proof.
\end{proof}

This ends the current treatment of truncations. It is hoped that the future study of truncations will be relevant for experiments.

\section{A general probabilistic model always exists}
\label{gp}

In order to understand as to how far the conditions found are characteristic of quantum mechanics, one should try to determine the analogous requirements for the probabilities in the case of alternative theories different from quantum mechanics and in the case of more general theories having quantum mechanics as a special case. This section deals with the case of general probabilistic theories. 

What follows is a brief exposition of the framework of general probabilistic theories and of the possible models for a black box system figure~\ref{bb}. Afterwards, it will be shown that every assignment of outcome probabilities for the black box system does have a general probabilistic model. Together with the results of the previous \added{two} sections, this shows that---for systems with two \replaced{dichotomic}{binary} measurements---quantum-mechanical models are a very special class of general probabilistic theories.

For the present purposes, a general probabilistic theory is defined by specifying a real vector space $V$, a non-vanishing linear functional $\tr:V\ra\R$, and a convex set of normalized states $\Omega\subseteq V$ such that
\beqn
\label{normalization}
\tr(\rho)=1\quad\forall\rho\in\Omega
\eeqn
The cone $\Omega_0\equiv\R_{\geq 0}\Omega$ is the set of all unnormalized states. By construction,
\beq
\Omega=\Omega_0\cap\tr^{-1}(1).
\eeq
Since all that matters for the physics is really $\Omega_0$ and $\tr$ on $\Omega_0$, one can assume without loss of generality that $\Omega_0$ spans $V$,
\beqn
\label{spansV}
V=\Omega_0-\Omega_0.
\eeqn

Now, an \emph{operation} is a linear map $T:V\ra V$ which maps unnormalized states to unnormalized states,
\beq
T(\Omega_0)\subseteq\Omega_0,
\eeq
and does not increase the trace,
\beq
\tr(T(\rho))\leq 1\quad\forall\rho\in\Omega.
\eeq
For $\rho\in\Omega$, the number $\tr(T(\rho))$ is interpreted as the probability that the operation takes place, given $T$ as one of several alternative operations characteristic of the experiment. In case that $T$ happens, the post-measurement state is given by
\beq
\rho'\equiv\frac{T(\rho)}{\tr(T(\rho))},
\eeq
where the denominator is just the normalization factor.

\begin{example}
As an example of this machinery, one may take density matrices as normalized states and completely positive \added{trace-nonincreasing} maps as operations. This is quantum theory; the usual form of a quantum operation in terms of Kraus operators can be recovered by virtue of the Stinespring factorization theorem.
\end{example}

A repeatable \replaced{dichotomic}{binary} measurement is then implemented by two operations $a,\overline{a}:V\ra V$ which are idempotent,
\beq
a^2=a,\quad\overline{a}^2=\overline{a},
\eeq
and complementary in the sense that the operation $a+\overline{a}$ preserves the trace. Physically, the operation $a$ takes place whenever the \added{dichotomic} measurement has \replaced{the outcome $1$}{a positive outcome}, whereas $\overline{a}$ happens in the case that the \added{dichotomic} measurement has \replaced{the outcome $0$}{a negative outcome}.

\begin{proposition}
Under these assumptions, $a\overline{a}=\overline{a}a=0$.
\end{proposition}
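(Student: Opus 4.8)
The plan is to play the idempotency of $a$ and $\overline{a}$ against the trace-preservation of their sum, applied not to an arbitrary state but to an already-processed one. Write $e\equiv a+\overline{a}$; the complementarity hypothesis says precisely that $e$ is trace-preserving, i.e.\ $\tr(e(x))=\tr(x)$ for all $x\in V$, and in particular for every unnormalized state $x\in\Omega_0$.

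The key step is the following computation. Fix a normalized state $\rho\in\Omega$ and consider the unnormalized state $a(\rho)\in\Omega_0$. Using $a^2=a$,
\[
e\bigl(a(\rho)\bigr)\;=\;a\bigl(a(\rho)\bigr)+\overline{a}\bigl(a(\rho)\bigr)\;=\;a(\rho)+\overline{a}\bigl(a(\rho)\bigr).
\]
Applying $\tr$ and using that $e$ preserves the trace yields $\tr\bigl(\overline{a}(a(\rho))\bigr)=0$. Now I would invoke the elementary fact that the only element of the cone $\Omega_0$ with vanishing trace is $0$: every nonzero element of $\Omega_0=\R_{\geq 0}\Omega$ is a strictly positive multiple of a normalized state, hence has strictly positive trace. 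Since $\overline{a}$ maps $\Omega_0$ into $\Omega_0$ (being an operation), this forces $\overline{a}(a(\rho))=0$ for every $\rho\in\Omega$. Because $\Omega$ spans $V$ --- by the normalization assumption\re{spansV}, $\Omega_0$, and hence $\Omega$, spans $V$ --- the linear map $\overline{a}a$ vanishes identically. The hypotheses on $a$ and $\overline{a}$ are completely symmetric, so interchanging their roles gives $a\overline{a}=0$ in exactly the same way.

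There is no serious obstacle here; once one spots the right expression to compute, it is a two-line argument. The only point requiring a moment's care is the passage from the scalar identity $\tr(\overline{a}(a(\rho)))=0$ to the operator identity $\overline{a}(a(\rho))=0$, which rests on the (implicit but essential) positivity of $\tr$ on the state cone $\Omega_0$: without it the vanishing of the trace would be merely a linear constraint on $\overline{a}(a(\rho))$ rather than its actual vanishing.
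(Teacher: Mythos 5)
Your argument is correct and is essentially the paper's own proof: both apply the trace-preserving map $a+\overline{a}$ to the processed state $a(\rho)$, use idempotency to get $\tr\bigl(\overline{a}(a(\rho))\bigr)=0$, and then conclude $\overline{a}a=0$ from the strict positivity of the trace on the nonzero elements of $\Omega_0$ together with the fact that $\Omega_0$ spans $V$. Your explicit remark on why vanishing trace forces the vanishing of the state is a welcome clarification of a step the paper leaves implicit.
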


\begin{proof}
Clearly, $a\overline{a}$ is an operation, and therefore it maps $\Omega_0$ to $\Omega_0$. On the other hand,
\beq
\tr(\overline{a}a(\rho))=\tr(a(\rho))-\tr(aa(\rho))=0,
\eeq
which proves $\overline{a}a=0$ by\re{normalization}. The other equation works in exactly the same way.
\end{proof}

The interpretation of this result is that, when $a$ has been measured with \deleted{a positive} outcome\added{ $1$}, then the opposite result $\overline{a}$ will certainly not occur in \replaced{an immediately sequential}{the second} measurement, and vice versa. \replaced{In this sense}{Therefore}, the measurement of $a$ vs. $\overline{a}$ is repeatable. 

In the previous sections, the quantum region was found to be a very small subset of the space of all \added{admissible} \replaced{probability assignments}{probabilities}. The following theorem shows that this is not the case for general probabilistic theories.

\begin{theorem}
\label{gpthm}
Given any admissible probability assignment for the $P_a$'s and $P_b$'s, there exists a general probabilistic model that reproduces these probabilities.
\end{theorem}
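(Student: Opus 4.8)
The plan is to construct the model by hand; in fact one can take it to be \emph{classical}, so the statement holds in the stronger form that every admissible assignment already admits a classical (simplex) model, once measurements are allowed to disturb the state. The idea is the obvious one: the ``state'' of the box is taken to be the complete record of what has been done to it, which is the sequential analogue of the Kolmogorov construction for stochastic processes, and admissibility is exactly the consistency condition that makes it work. Concretely, let $\Lambda$ consist of a distinguished symbol $\star$ (the fresh box) together with all pairs $(i;w)$ with $i\in\{a,b\}$ and $w\in\{0,1\}^*$ such that $P_i(w)>0$; the pair $(i;w)$ records that the alternating measurement sequence begun with $i$ has so far been run $|w|$ times, producing the outcome string $w$. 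Let $V$ be the real vector space of finitely supported functions $\Lambda\to\R$, with basis $\{\delta_\lambda\}_{\lambda\in\Lambda}$, put $\tr(\rho)=\sum_\lambda\rho(\lambda)$, let $\Omega$ be the finitely supported probability distributions on $\Lambda$ and $\Omega_0$ the finitely supported nonnegative functions. Then $\tr$ is a non-vanishing linear functional, $\tr|_\Omega\equiv1$, and $\Omega_0$ spans $V$, so $(V,\tr,\Omega)$ is a general probabilistic theory; the black box is initialized in $\rho_0=\delta_\star$.

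Next I would define the operations. For $\lambda=(i;w)$ write $\nu(\lambda)\in\{a,b\}$ for the observable whose turn is next in the alternating sequence, so $\nu(\lambda)=i$ when $|w|$ is even and the other observable when $|w|$ is odd; measuring $\nu(\lambda)$ advances the sequence, while measuring the other observable merely repeats the most recent measurement. For $X\in\{a,b\}$ and $j\in\{0,1\}$ the operation $X^j:V\to V$ acts on basis vectors by
\beq
X^j(\delta_\star)=P_X(j)\,\delta_{(X;\,j)},\qquad
X^j(\delta_{(i;w)})=\frac{P_i(wj)}{P_i(w)}\,\delta_{(i;\,wj)}\quad\text{if }X=\nu((i;w)),
\eeq
where the second expression denotes $0\in V$ whenever $P_i(wj)=0$ (so that no label outside $\Lambda$ ever occurs), and, when $X\neq\nu((i;w))$, by setting $X^j(\delta_{(i;w)})=\delta_{(i;w)}$ if $j$ is the last letter of $w$ and $X^j(\delta_{(i;w)})=0$ otherwise; extend each $X^j$ linearly to $V$ and put $a=a^1$, $\overline{a}=a^0$, $b=b^1$, $\overline{b}=b^0$. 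Since each $X^j$ sends every basis vector to a nonnegative multiple of a basis vector, it maps $\Omega_0$ into $\Omega_0$ and does not increase the trace, so it is a legitimate operation.

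It then remains to verify that $\{a,\overline{a}\}$ and $\{b,\overline{b}\}$ are repeatable dichotomic measurements in the sense of section~\ref{gp} and that the model reproduces the prescribed probabilities. Trace preservation of $X^0+X^1$: on $\delta_\star$ it equals $P_X(0)+P_X(1)=1$; on an advancing $\delta_{(i;w)}$ it equals $\bigl(P_i(w0)+P_i(w1)\bigr)/P_i(w)=1$ by the conservation law\re{conserved} together with $P_i(w)>0$ for labels in $\Lambda$; on a repetition it equals $1$ because exactly one of $j=0,1$ is the last letter of $w$. Idempotence: after one application of $X^j$ the resulting label $\lambda'$, whenever the coefficient is nonzero, satisfies $\nu(\lambda')\neq X$ and has last letter $j$, so a second application of $X^j$ returns $\delta_{\lambda'}$ unchanged, and nothing needs checking when the coefficient vanishes; hence $(X^j)^2=X^j$. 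Finally, an induction on the length, telescoping the ratios, shows that for the sequence $a,b,a,\ldots$
\beq
a^{r_n}\circ\cdots\circ b^{r_2}\circ a^{r_1}(\rho_0)=P_a(r_1,\ldots,r_n)\,\delta_{(a;\,r_1\cdots r_n)}
\eeq
(interpreted as $0$ once the coefficient vanishes), whose trace is $P_a(r_1,\ldots,r_n)$; the sequence $b,a,b,\ldots$ is handled symmetrically, and any non-alternating sequence is sent by the repetition clause to exactly the value forced by repeatability. Thus all outcome probabilities of the model agree with the given admissible assignment, proving the theorem.

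The whole argument is bookkeeping --- no inequality and no structural input enters, in complete contrast to theorem~\ref{mainthm} --- and the one genuinely substantive point is the idempotence of the operations, which holds for a transparent reason: re-measuring a just-measured observable is deterministic by repeatability, so the second application changes nothing. The remaining care is organizational: one must keep the ``mode'' $i$ distinct from the observable $X$ currently being measured (the conditional probabilities are read off from $P_i$ of the mode, not from $P_X$), treat the fresh box $\star$ as its own history, and restrict $\Lambda$ to histories of positive probability so that the conditional probabilities are defined and $X^0+X^1$ turns out exactly trace-preserving rather than merely trace-nonincreasing.
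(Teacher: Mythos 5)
Your proof is correct, but it takes a genuinely different route from the paper's. The paper constructs a single universal model: the unital $\R$-algebra $\mathcal{A}_{gp}$ freely generated by four idempotents $v_a,v_{\overline{a}},v_b,v_{\overline{b}}$ subject to $v_av_{\overline{a}}=v_{\overline{a}}v_a=v_bv_{\overline{b}}=v_{\overline{b}}v_b=0$; the reduced alternating words form a linear basis, the given probabilities are read off as the values of a linear functional on that basis, admissibility\re{conserved} is exactly what makes that functional satisfy the state conditions\re{consprobstate}, and the four operations act by left multiplication. This mirrors the $C^*$-algebraic strategy of section~\ref{quantum} and exhibits an ``initial object'' among all general probabilistic models of the box. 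Your construction instead builds, for each given assignment, a concrete model whose state space is the simplex of finitely supported distributions on measurement histories, with operations that either advance the alternating sequence (weighted by conditional probabilities, which telescope) or deterministically repeat the last outcome. The bookkeeping you carry out --- restricting $\Lambda$ to positive-probability histories, distinguishing the mode $i$ from the observable currently being measured, and checking idempotence and trace preservation of $X^0+X^1$ via\re{conserved} --- is sound. What your route buys is a strictly stronger conclusion: the model can always be taken to be \emph{classical} (a simplex with disturbing measurements), which sharpens the contrast with theorem~\ref{mainthm}. What the paper's route buys is universality: one algebra and one uniform recipe for the state covers all admissible assignments at once. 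Two cosmetic points: for even $n$ the outermost operation in your display should be $b^{r_n}$ rather than $a^{r_n}$, and both proofs implicitly use the normalization $P_i(0)+P_i(1)=1$, i.e.\ the conservation law applied to the empty string.
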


\begin{proof}
The idea of the proof is analogous to the characterization of the quantum region done in section~\ref{quantum}: to try and construct a universal theory for the black box system, which covers all of the allowed region in probability space at once. In order to achieve category-theoretic universality (an initial object in the appropriate category), one \replaced{needs}{nneds} to consider the unital $\R$-algebra freely generated by formal variables $v_a$, $v_{\overline{a}}$, $v_b$, $v_{\overline{b}}$, subject to the relations imposed by the above requirements. Hence the definition is this,
\beqa
\mathcal{A}_{gp}=\Big\langle v_a,v_{\overline{a}},v_b,v_{\overline{b}}\:\:|&&v_av_{\overline{a}}=v_{\overline{a}}v_a=v_bv_{\overline{b}}=v_{\overline{b}}v_b=0,\\
&&v_a^2=v_a,\:v_{\overline{a}}^2=v_{\overline{a}},\:v_b^2=v_b,\:v_{\overline{b}}^2=v_{\overline{b}}\:\Big\rangle_{\R\mathtt{Alg}}
\eeqa
where the notation indicates that this is to be understood as a definition in terms of generators and relations in the category of unital associative algebras over the field $\R$. The index $gp$ stands for ``general probabilistic''. \added{This definition guarantees that any finite product of generators can be reduced to one of the form
\beq
v_{a^{r_1}}v_{b^{r_2}}v_{a^{r_3}}\ldots\qquad\textrm{or}\qquad v_{a^{r_1}}v_{b^{r_2}}v_{a^{r_3}}\ldots\:.
\eeq
These expressions, together with the unit $\mathbbm{1}$, form a linear basis of $\mathcal{A}_{gp}$.}

Now an unnormalized state on $\mathcal{A}_{gp}$ is defined to be a linear functional
\beq
\rho:\mathcal{A}_{gp}\lra\R
\eeq
which is required to be non-negative on all products of generators and the unit $1$, and additionally needs to satisfy
\beqn
\label{consprobstate}
\rho\left(x(v_a+v_{\overline{a}})\right)=\rho(x),\quad\rho\left(x(v_b+v_{\overline{b}})\right)=\rho(x)
\eeqn
for any $x\in\mathcal{A}_{gp}$. The set of unnormalized states $\Omega_0$ is a convex cone in the vector space dual $\mathcal{A}_{gp}^*$. The trace \added{functional} is defined to be
\beq
\tr(\rho)\equiv\rho(1),
\eeq
so that a state is normalized if and only if $\rho(1)=1$. \added{Thereby the state space $\Omega$ is defined.}

Now for the definition of the operators $a$, $\overline{a}$, $b$, $\overline{b}$\added{, which should map $\Omega_0$ to itself}. Given an unnormalized state $\rho\in\Omega_0$, they produce a new state which is defined as
\beqa
a(\rho)(x)&\equiv&\rho(\replaced{v_ax}{xv_a})\\
\overline{a}(\rho)(x)&\equiv&\rho(\replaced{v_{\overline{a}}x}{xv_{\overline{a}}})\\
b(\rho)(x)&\equiv&\rho(\replaced{v_bx}{xv_b})\\
\overline{b}(\rho)(x)&\equiv&\rho(\replaced{v_{\overline{b}}x}{xv_{\overline{b}}})
\eeqa
Since $v_a^2=v_a$, it follows that $a^2=a$, and \replaced{similarly it follows that $\overline{a}^2=\overline{a}$, $b^2=b$ and $\overline{b}^2=\overline{b}$ hold true.}{analogous derivations work for all the other relations required to hold between the $a$, $\overline{a}$, $b$, $\overline{b}$.}

Now given any initial state $\rho$ and conducting the alternating measurements of $a$ and $b$, the model predicts outcome probabilities that are given by
\beqn
\label{gpprobs}
\begin{split}
P_a(r)=\rho\left(v_{a^{r_1}}v_{b^{r_2}}v_{a^{r_3}}\ldots\right)\\
P_b(r)=\rho\left(v_{b^{r_1}}v_{a^{r_2}}v_{b^{r_3}}\ldots\right)
\end{split}
\eeqn
So given any assignment of outcome probabilities $P_a$, $P_b$, one can regard the equations\re{gpprobs} as a definition of $\rho$ on products of generators. This $\rho$ extends to a state on $\mathcal{A}_gp$ by linearity, where the equations\replaced{\re{consprobstate}}{\re{outcomeprob}} hold by conservation of probability\added{\re{conserved}}. This ends the proof.
\end{proof}

\section{Remarks on potential generalizations}
\label{gens}

It would certainly be desirable to generalize the present results about quantum mechanics to situations involving a higher number of measurements or a higher number of outcomes per measurement or by allowing non-trivial dynamics for the system. I will now describe the corresponding $C^*$-algebras involved in this which one would have to understand in order to achieve such a generalization.

Consider a ``black box'' system analogous to figure~\ref{bb} on which the experimenter can conduct $k$ different kinds of measurement. Suppose also that the $j$th measurement has $n_j\in\N$ \added{possible} outcomes, and that again these measurements are repeatable, which again implies the absence of non-trivial dynamics.

A quantum-mechanical observable describing a von Neumann measurement with $n$ possible outcomes is given by a hermitian operator with (up to) $n$ different eigenvalues. Since the eigenvalues are nothing but arbitrary labels of the measurement outcomes, we might as well label the outcomes by the roots of unity $e^{\frac{2\pi i l}{n}}$, $l\in\{0,\ldots,n-1\}$. But then in this case the observable is given by a unitary operator $u$ which satisfies $u^n=1$. Conversely, given any unitary operator $u$ of order $n$, we can diagonalize $u$ into eigenspaces with eigenvalues being the roots of unity $e^{\frac{2\pi i l}{n}}$, and therefore we can think of $u$ as being an observable where the $n$ outcomes are labelled by the \added{$n$th} roots of unity.

By this reasoning, the specification of $k$ observables where the $j$th observable has $n_j$ different outcomes is equivalent to specifying $k$ unitary operators, where the $j$th operator is of order $n_j$. Hence, the corresponding universal $C^*$-algebra is in this case given by the $C^*$-algebra freely generated by unitaries of the appropriate orders. But this object in turn coincides with the \added{maximal} group $C^*$-algebra
\beq
C^*(\Z_{n_1}\ast\ldots\ast\Z_{n_k})
\eeq
where the group is the indicated free product of finite cyclic groups. One should expect that these $C^*$-algebras have a very intricate structure in general; for example when $k=2$ and $n_1=2$, $n_2=3$, one has the well-known isomorphism $\Z_2\ast\Z_3\cong PSL_2(\Z)$, so that one has to deal with the \added{maximal} group $C^*$-algebra of the modular group.

\section{Possible experimental tests of quantum mechanics}
\label{experiments}

The results of the previous sections show that the quantum region is certainly much smaller in the space of all probabilities than the general probabilistic region. Therefore, specific experimental tests of the quantum constraints from theorem~\ref{mainthm} in a finite truncation seem indeed appropriate. \added{Among the obvious requirements for such an experiment are
\begin{itemize}
\item One needs a system with two dichotomic observables, which are very close to ideal von Neumann measurements.
\item It has to be possible to measure these observables without destroying the observed system.
\end{itemize}}
\deleted{However,} There is an\added{other} important caveat: for sufficiently small systems with many symmetries, it \replaced{can}{might} be the case that any general probabilistic model is automatically a quantum theory. For example, when the convex set of states of a general probabilistic theory lives in $\R^3$ together with its usual action of the rotation group $SO(3)$ as symmetries, then it is automatically implied that the system is described by quantum mechanics, since every bounded and rotationally invariant convex set in $\R^3$ is a ball and therefore affinely isomorphic to the quantum-mechanical Bloch ball. This observation shows that some obvious candidates for experimental tests---like a photon sent through two kinds of polarizers\added{ with different orientations}---are too small for a successful distinction of quantum theory vs. different general probabilistic theories \added{along the lines proposed in this paper. On the other hand, genuinely dichotomic measurements are hard to come by on bigger systems, as this requires a high level of degeneracy}. \deleted{For these reasons, one should try to find systems with many degrees of freedom and few symmetries in order to conceive of experimental tests of quantum mechanics.} The \added{three-photon experiment or the} quantum dot experiment described in section~\ref{prelim} might be \deleted{a} good starting point\added{s} for further investigation of \added{all of} these issues.

\section{Conclusion}
\label{concl}

This \replaced{paper}{article} was concerned with the simplest non-trivial case of the \replaced{representation}{inverse} problem of quantum measurement for iterated measurements: given the probabilities for outcomes of sequences of iterated measurements on a physical systems, under which conditions can there exist a quantum-mechanical model of the system which \replaced{represents}{recovers} these probabilities? This question has been answered \added{by theorem~\ref{mainthm}} to the extent that there are several infinite sequences of constraints, all of which \replaced{come}{are} rather unexpected (at least to the author). They show that the quantum region in the space of all probabilities is actually quite small and comparatively low-dimensional. \added{On the other hand, theorem~\ref{gpthm} shows that every point in the space of all probabilities can be represented by a general probabilistic model}. In this sense, quantum-mechanical \replaced{models are of}{theories belong to} a very specific \replaced{kind}{class}. The present results yield no insight on the question why our world should be quantum-mechanical---to the contrary, the conditions in theorem\re{mainthm} are so unituitive and complicated that the existence of a direct physical reason for their presence seems unlikely.

A clearly positive feature of the strict constraints for quantum-mechanical models is that they could facilitate further experimental tests of quantum mechanics.

\begin{appendix}
\section{Appendix: Two noncommutative moment problems}
\label{app}

Let $\mathcal{A}\equiv\mathscr{C}\left([0,1],M_2(\C)\right)$ be the $C^*$-algebra of continuous functions with values in $2\!\times\!2$-matrices. The variable of these matrix-valued functions is denoted by $t\in[0,1]$.

\begin{theorem}
\label{ncmp1}
Given real numbers $M_1(n,s)$, $M_x(n,s)$ and $M_z(n,s)$ for each $n\in\N_0$ and \mbox{$s\in\{0,\ldots,n\}$}, there exists a state $\rho$ on $\mathcal{A}$ that has the moments
\beqn
\label{rels}
\begin{split}
M_1(n,s)=\rho\left(t^{n-s}(1-t)^s\cdot\mathbbm{1}_2\right)\\
M_x(n,s)=\rho\left(t^{n-s}(1-t)^s\cdot\sigma_x\right)\\
M_z(n,s)=\rho\left(t^{n-s}(1-t)^s\cdot\sigma_z\right)
\end{split}
\eeqn
if and only if the following conditions hold:
\begin{itemize}
\item probability conservation:
\beqn
\label{probconserv}
M_i(n,s)=M_i(n+1,s)+M_i(n+1,s+1)\quad\forall i\in\{1,x,z\}
\eeqn
\item non-negativity:
\beqn
\label{nonneg}
M_1(n,s)\geq\sqrt{M_x(n,s)^2+M_z(n,s)^2}
\eeqn
\item normalization:
\beqn
\label{normalized}
M_1(0,0)=1
\eeqn
\end{itemize}
\end{theorem}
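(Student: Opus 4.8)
\section*{Proof proposal}

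The plan is to prove the two implications separately; the ``only if'' direction is a short computation, and the ``if'' direction reduces the problem to a classical scalar Hausdorff moment problem followed by an almost-everywhere positivity argument. For ``only if'', given a state $\rho$, probability conservation\re{probconserv} follows by applying $\rho$ to the identity $t^{n-s}(1-t)^s=t^{n+1-s}(1-t)^s+t^{n-s}(1-t)^{s+1}$ multiplied by $\mathbbm{1}_2$, $\sigma_x$ or $\sigma_z$; normalization\re{normalized} is immediate since $t^0(1-t)^0\mathbbm{1}_2=\mathbbm{1}_2$; and non-negativity\re{nonneg} follows because for every unit vector $(\alpha,\beta)\in\R^2$ the function $t^{n-s}(1-t)^s(\mathbbm{1}_2+\alpha\sigma_x+\beta\sigma_z)$ is a positive element of $\mathcal{A}$ (its values have eigenvalues $0$ and $2$ times a nonnegative scalar), so that $M_1(n,s)+\alpha M_x(n,s)+\beta M_z(n,s)\geq0$, and optimizing over $(\alpha,\beta)$ gives\re{nonneg}.

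For the ``if'' direction, first I would recall the standard description of positive functionals on $\mathcal{A}=\mathscr{C}([0,1],M_2(\C))\cong\mathscr{C}([0,1])\otimes M_2(\C)$: any regular Borel measure $W$ on $[0,1]$ with values in the positive semidefinite $2\times2$ matrices and $\tr W([0,1])=1$ defines a state via $\rho(f)=\int_{[0,1]}\tr(f(t)\,dW(t))$. (This follows by applying the scalar Riesz representation theorem to the four matrix-entry functionals; for the present purpose we only need this sufficiency direction.) Writing $W$ in the Pauli basis as $dW=\tfrac12(d\mu_1\,\mathbbm{1}_2+d\mu_x\,\sigma_x+d\mu_y\,\sigma_y+d\mu_z\,\sigma_z)$ with $\mu_1$ a positive measure and $\mu_x,\mu_y,\mu_z$ real measures, and taking $\mu_y=0$, the task reduces to: produce $\mu_1\geq0$ and signed $\mu_x,\mu_z$ on $[0,1]$ with $\int t^{n-s}(1-t)^s\,d\mu_i=M_i(n,s)$ for $i\in\{1,x,z\}$, with $\mu_1([0,1])=1$, such that the matrix density $\tfrac12(\mathbbm{1}_2+\tfrac{d\mu_x}{d\mu_1}\sigma_x+\tfrac{d\mu_z}{d\mu_1}\sigma_z)$ is positive semidefinite $\mu_1$-a.e.

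The construction proceeds in three steps. First, since $\{t^{n-s}(1-t)^s:0\leq s\leq n\}$ is a linear basis of the polynomials of degree $\leq n$, a triangular array $(M_1(n,s))$ satisfying\re{probconserv} and the $i=1$ case $M_1(n,s)\geq0$ of\re{nonneg} is exactly a Hausdorff moment array; by the classical Hausdorff moment theorem (see e.g.~\cite{Wi}) it is realized by a unique positive measure $\mu_1$ on $[0,1]$, with total mass $M_1(0,0)=1$. Second, since\re{nonneg} forces $|M_x(n,s)|\leq M_1(n,s)$ and $|M_z(n,s)|\leq M_1(n,s)$, the arrays $M_1\pm M_x$ and $M_1\pm M_z$ are all nonnegative and satisfy\re{probconserv}, hence are Hausdorff moment arrays; let $\nu_x^\pm,\nu_z^\pm$ be the corresponding positive measures and set $\mu_x=\tfrac12(\nu_x^+-\nu_x^-)$, $\mu_z=\tfrac12(\nu_z^+-\nu_z^-)$, which have the prescribed moments. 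Because $\nu_x^++\nu_x^-$ and $2\mu_1$ have the same moments against every polynomial, determinacy of the Hausdorff problem on the compact set $[0,1]$ (Weierstrass density) gives $\nu_x^++\nu_x^-=2\mu_1$, so $\nu_x^\pm\leq2\mu_1$, hence $\mu_x\ll\mu_1$ with Radon--Nikodym derivative $g_x$ satisfying $|g_x|\leq1$; likewise $\mu_z\ll\mu_1$ with derivative $g_z$. Third, and this is the step that uses\re{nonneg} in full strength rather than coordinatewise: for each fixed unit vector $(\alpha,\beta)$, inequality\re{nonneg} says $\int t^{n-s}(1-t)^s(1+\alpha g_x+\beta g_z)\,d\mu_1\geq0$ for all $n,s$; approximating an arbitrary nonnegative $p\in C[0,1]$ by its Bernstein polynomials---which are nonnegative combinations of the functions $t^{n-s}(1-t)^s$---and using $g_x,g_z\in L^1(\mu_1)$ to pass to the uniform limit, one gets $\int p\,(1+\alpha g_x+\beta g_z)\,d\mu_1\geq0$ for all such $p$, whence $1+\alpha g_x+\beta g_z\geq0$ $\mu_1$-a.e.; running over a countable dense set of unit vectors and using continuity yields $1-\sqrt{g_x^2+g_z^2}\geq0$ $\mu_1$-a.e.

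Finally I would assemble the pieces: put
\[
dW(t)=\tfrac12\bigl(\mathbbm{1}_2+g_x(t)\,\sigma_x+g_z(t)\,\sigma_z\bigr)\,d\mu_1(t),
\]
which is positive semidefinite matrix-valued (eigenvalues $\tfrac12(1\pm\sqrt{g_x^2+g_z^2})\geq0$) with trace-mass $\mu_1([0,1])=1$, define $\rho(f)=\int\tr(f(t)\,dW(t))$, and verify that $\rho$ is a state with the required moments using $\tr(\mathbbm{1}_2\,dW)=d\mu_1$, $\tr(\sigma_x\,dW)=d\mu_x$, $\tr(\sigma_z\,dW)=d\mu_z$. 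The main obstacle I anticipate is the third step---upgrading the countable family of scalar inequalities\re{nonneg} to the pointwise $\mu_1$-a.e.\ positive semidefiniteness of the matrix density---together with making the identification of states on $\mathscr{C}([0,1],M_2(\C))$ with positive-semidefinite-matrix-valued measures fully precise; the Hausdorff-moment input in the first two steps is classical and I would simply cite it.
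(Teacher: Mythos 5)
Your proof is correct, but it takes a genuinely different route from the paper's. For the ``only if'' direction both arguments are essentially the same positivity computation. For the ``if'' direction, the paper works entirely on the functional-analytic side: it defines approximating functionals $\rho_n$ on the space of matrix-valued polynomials via Bernstein sums of the given moments, shows they are uniformly bounded by the $C^*$-norm using\re{nonneg}, proves weak convergence to the trial functional determined by\re{rels} via Gamma-function asymptotics, and then extends by Hahn--Banach, reading off that the extension is a state from $\|\widehat{\rho}\,\|=\widehat{\rho}(\mathbbm{1})=1$. You instead reduce to the classical scalar Hausdorff moment problem (citing it as a black box for the four completely monotone arrays $M_1$ and $M_1\pm M_x$, $M_1\pm M_z$), obtain the off-diagonal components as signed measures absolutely continuous with respect to $\mu_1$ via the domination $\nu^{\pm}\leq 2\mu_1$, and upgrade the countable family of inequalities\re{nonneg} to $\mu_1$-a.e.\ positive semidefiniteness of the matrix density by a Bernstein approximation plus a countable-dense-set-of-directions argument. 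Your route buys a more explicit representation of the state as a positive-semidefinite-matrix-valued measure $dW=\tfrac12(\mathbbm{1}_2+g_x\sigma_x+g_z\sigma_z)\,d\mu_1$ and avoids the norm estimates and the Gamma-function convergence argument, at the cost of importing the full Hausdorff theorem (existence and determinacy) and the Riesz/Radon--Nikodym machinery; the paper's route is self-contained modulo Bernstein approximation and produces the state in one Hahn--Banach stroke without ever exhibiting the representing measure. The only points in your sketch that would need to be written out carefully are the ones you already flag: the sufficiency direction of the correspondence between positive matrix-valued measures and states on $\mathscr{C}([0,1],M_2(\C))$ (easy, since $\tr(f\,dW)\geq 0$ for $f\geq 0$), and the a.e.\ argument over a countable dense set of unit vectors $(\alpha,\beta)$, which is standard since $(\alpha,\beta)\mapsto 1+\alpha g_x(t)+\beta g_z(t)$ is continuous for each fixed $t$.
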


\begin{proof}
This proof is an adaptation of the solution of the Hausdorff moment problem as it is outlined in~\cite[III \textsection 2]{Wi}. Given the state $\rho$, it follows that\re{probconserv} holds by $1=t+(1-t)$. \deleted{The non-negativity of $M_1(n,s)$ is a consequence of the fact that the function $t^{n-s}(1-t)^s\cdot\mathbbm{1}_2$ is a positive element of $\mathcal{A}$.} For the \replaced{non-negativity}{second} inequality, note that the linear combination
\beq
c\,\mathbbm{1}_2+r\,\sigma_x+s\,\sigma_z
\eeq
is a positive matrix if and only if both the determinant and the trace are non-negative, which means that $r^2+s^2\leq c^2$ and $c\geq 0$. Hence in this case, the function
\beq
t^{n-s}(1-t)^s\cdot\left(c\,\mathbbm{1}_2+r\,\sigma_x+s\,\sigma_z\right)
\eeq
is a positive element of $\mathcal{A}$, and the assertion follows by applying $\rho$ to this function and choosing the values
\beq
r=-M_x(n,s),\quad s=-M_z(n,s),\quad c=\sqrt{M_x(n,s)^2+M_z(n,s)^2}.
\eeq

The main burden of the proof is to construct a state $\rho$, given \replaced{moments which satisfy}{a sequence of moments that satisfies} the constraints\re{probconserv},\re{nonneg} \added{and\re{normalized}}. First of all,\re{probconserv} \deleted{, \re{nonneg} and\re{normalized} together} impl\replaced{ies}{y} that
\beqn
\label{below1}
\added{M_i(n,s)=\sum_{r=s}^{k-n+s}\binom{k-n}{r-s}M_i(k,r),\quad \forall k\geq n,\: i\in\{1,x,z\},}\quad\deleted{M_1(n,s)\geq 1\quad\forall n,s}
\eeqn
\added{which can be proven by induction on $k$. Since the binomial coefficient vanishes in that case, it is also possible to sum from $k=0$ up to $r=k$ without changing the left-hand side.}

Now denote by $\mathcal{P}$ the real vector space of $\R[t]$-linear combinations of the matrices $\mathbbm{1}_2$, $\sigma_x$ and $\sigma_z$. The state $\rho$ will first be constructed on $\mathcal{P}$, which is a real linear subspace of $\mathcal{A}$.

Recall that the Bernstein polynomials~\cite{Lo}
\beq
B_{n,s}(t)=\binom{n}{s}t^s(1-t)^{n-s}
\eeq
can be used to approximate any continuous function on $[0,1]$ in the sense that the approximants
\beq
A_n(f)(t)\equiv\sum_{s=0}^nf\left(\frac{s}{n}\right)B_{n,s}(t)
\eeq
converge uniformly to $f$,
\beq
\left|f(t)-A_n(f)(t)\right|<\varepsilon_n\:\:\forall t\in[0,1],\quad\:\varepsilon_n\stackrel{n\ra\infty}{\lra}0.
\eeq
The Bernstein polynomials can be used to construct a sequence of approximating states $\rho_n$ on $\mathcal{P}$\added{, $n\in\N$}. The\deleted{se} \added{$\rho_n$} are defined in terms of the given moments as
\beq
\rho_n\left(P_1(t)\mathbbm{1}_2+P_x(t)\sigma_x+P_z(t)\sigma_z\right)
\eeq
\beq
\equiv\sum_{s=0}^n\binom{n}{s}\left[P_1\left(\frac{s}{n}\right)M_1(n,s)+P_x\left(\frac{s}{n}\right)M_x(n,s)+P_z\left(\frac{s}{n}\right)M_z(n,s)\right].
\eeq
for any polynomials $P_1$, $P_x$ and $P_z$. Although it is hard to directly check convergence of the sequence $\left(\rho_n\right)_{n\in\N}$, it is at least clear that the $\rho_n$ are uniformly bounded,
\beqn
\label{bounded}
\begin{split}
|\rho_n(P_1(t)\mathbbm{1}_2+&P_x(t)\sigma_x+P_z(t)\sigma_z)\,|\\
\leq&\added{\sum_{s=0}^n\binom{n}{s}\Bigg[\left|P_1\left(\frac{s}{n}\right)\right|M_1(n,s)+\sqrt{P_x\left(\frac{s}{n}\right)^2+P_z\left(\frac{s}{n}\right)^2}\cdot}\\
&\added{\cdot\bigg|\frac{P_x(\frac{s}{n})}{\sqrt{P_x\left(\frac{s}{n}\right)^2+P_z\left(\frac{s}{n}\right)^2}}M_x(n,s)+\frac{P_z(\frac{s}{n})}{\sqrt{P_x\left(\frac{s}{n}\right)^2+P_z\left(\frac{s}{n}\right)^2}}M_z(n,s)\bigg|\Bigg]}\\
\added{\stackrel{\re{nonneg}}{\leq}}&\added{\sum_{s=0}^n\binom{n}{s}\Bigg[\left|P_1\left(\frac{s}{n}\right)\right|M_1(n,s)+\sqrt{P_x\left(\frac{s}{n}\right)^2+P_z\left(\frac{s}{n}\right)^2}\cdot M_1(n,s)\Bigg]}\\
\stackrel{\re{below1},\re{normalized}}{\leq}&\max_{t\in[0,1]}\left[\left|P_1(t)\right|+\sqrt{P_x(t)^2+P_z(t)^2}\right]\\
\stackrel{\phantom{\re{nonneg}}}{=}&\max_{t\in[0,1]}\left|\left|P_1(t)\mathbbm{1}_2+P_x(t)\sigma_x+P_z(t)\sigma_z\right|\right|
\end{split}
\eeqn
where the \deleted{norm in the} last expression coincides with the $C^*$-algebra norm on $\mathcal{A}$.

On the other hand, let $\mathcal{P}_n$ be the subspace of $\mathcal{P}$ where the polynomials are of degree up to $n$. A basis of $\mathcal{P}_n$ is given by the $3n+3$ matrix\added{-valued} polynomials
\beqn
\label{basispolys}
B_{n,s}\mathbbm{1}_2,\:\: B_{n,s}\sigma_x,\:\: B_{n,s}\sigma_z;\quad s\in\{0,\ldots,n\}.
\eeqn
Then the requirements\re{rels} uniquely define a linear functional $\widetilde{\rho}_k:\mathcal{P}_k\ra\R$,
\beqa
\widetilde{\rho}_k\left(B_{n,s}\mathbbm{1}_2\right)&=&M_1(n,n-s)\\
\widetilde{\rho}_k\left(B_{n,s}\sigma_x\right)&=&M_x(n,n-s)\\
\widetilde{\rho}_k\left(B_{n,s}\sigma_z\right)&=&M_z(n,n-s).
\eeqa
But now the relations
\beq
\frac{B_{n,s}}{\binom{n}{s}}=\frac{B_{n+1,s}}{\binom{n+1}{s}}+\frac{B_{n+1,s+1}}{\binom{n+1}{s+1}},
\eeq
in conjunction with the additivity law\re{probconserv}, show that the diagram
\beq
\xymatrix{{}\mathcal{P}_k\ar[rr]\ar[rd] && {\mathcal{P}}_{k+1}\ar[ld]\\ & {\mathbb{R}}}
\eeq
commutes for all $k$. Therefore, the $\widetilde{\rho}_k$ extend to a \replaced{linear functional}{trial state} $\widetilde{\rho}:\mathcal{P}\ra\R$, which is now defined on all of $\mathcal{P}$. The problem with $\widetilde{\rho}$ is that \replaced{its boundedness is hard to check}{it is not obviously bounded}.

Therefore, the rest of this proof is devoted to showing that the approximating states converge to the trial state in the weak sense:
\beq
\rho_k(P)\stackrel{k\ra\infty}{\lra}\widetilde{\rho}(P)\quad\forall P\in\mathcal{P}. 
\eeq
Then\re{bounded} implies that $\widetilde{\rho}$ is bounded and $||\widetilde{\rho}||=1$. Hence the Hahn-Banach extension theorem shows that $\widetilde{\rho}$ can be extended to a linear functional $\widehat{\rho}:\mathcal{A}\ra\C$ with $||\widehat{\rho}||=1$. This proves the original assertion by the fact that this is automatically a state as soon as $||\widehat{\rho}\,||=\added{\widehat{\rho}}(\mathbbm{1})=1$ holds, and the construction of $\widehat{\rho}$ such that the equations\re{rels} hold for this state.

In order to check this convergence, it is sufficient to consider the values of the states on the basis polynomials\re{basispolys}. And for those, the calculation will be shown only for the first type $B_{n,s}\mathbbm{1}_2$, since the other two work in exactly the same way.
\begin{align*}
\begin{split}
\widetilde{\rho}\left(B_{n,n-s}(t)\mathbbm{1}_2\right)- & \rho_k\left(B_{n,n-s}(t)\mathbbm{1}_2\right)\\
&=\added{\binom{n}{s}}M_1(n,s)-\added{\binom{n}{s}}\sum_{r=0}^k\binom{k}{r}\left(\frac{r}{k}\right)^{n-s}\left(1-\frac{r}{k}\right)^sM_1(k,r)\\
&\stackrel{(\replaced{\ref{below1}}{\ref{probconserv}})}{=}\added{\binom{n}{s}}\sum_{r=0}^k\left[\binom{k-n}{r-s}-\binom{k}{r}\left(\frac{r}{k}\right)^{n-s}\left(1-\frac{r}{k}\right)^s\right]M_1(k,r)\\
&=\added{\binom{n}{s}}\sum_{r=0}^k\left[\frac{\binom{k-n}{r-s}}{\binom{k}{r}}-\left(\frac{r}{k}\right)^{n-s}\left(1-\frac{r}{k}\right)^s\right]\binom{k}{r}M_1(k,r)
\end{split}
\end{align*}
Therefore using $\sum_{r=0}^k\binom{k}{r}M_1(k,r)=M_1(0,0)=1$,
\beqn
\label{gammastuff}
\left|\widetilde{\rho}\left(B_{n,n-s}(t)\mathbbm{1}_2\right)-\rho_k\left(B_{n,n-s}(t)\mathbbm{1}_2\right)\right|\leq\added{\binom{n}{s}}\max_{r=0}^k\left|\frac{\binom{k-n}{r-s}}{\binom{k}{r}}-\left(\frac{r}{k}\right)^{n-s}\left(1-\frac{r}{k}\right)^s\right|
\eeqn
\beq
\leq\added{\binom{n}{s}}\max_{y\in[0,1]}\left|\frac{\Gamma(k-n+1)}{\Gamma(k+1)}\cdot\frac{\Gamma(ky+1)}{\Gamma(ky-s+1)}\cdot\frac{\Gamma(k(1-y)+1)}{\Gamma(k(1-y)-n+s+1)}-y^{n-s}(1-y)^s\right|
\eeq
This expression trivially vanishes for $y=0$ and for $y=1$. For $y\in(0,1)$, all the Gamma function arguments tend to infinity, therefore the formula
\beq
\lim_{t\ra\infty}\frac{\Gamma(t+m+1)}{\Gamma(t+1)}\cdot t^{-m}=1
\eeq
can be applied in the form
\beq
\left|\frac{\Gamma(t+m+1)}{\Gamma(t+1)}-t^m\right|<\varepsilon\cdot t^m\quad\forall t\geq t_0(m,\varepsilon)
\eeq
to show that\re{gammastuff} vanishes in the $k\ra\infty$ limit. This finally ends the proof.
\end{proof}

Before studying the \replaced{second}{next} noncommutative moment problem, some preparation is needed. So let $\mathcal{C}\subseteq\R^4$ be the set of points $(x_0,y_0,x_1,y_1)\in\R^4$ with the following property: the rectangle in $\R^2$ that is spanned by $(x_0,y_0)$ as the lower left corner and $(x_1,y_1)$ as the upper right corner has non-empty intersection with the unit disc $\{(x,y)\,|\,x^2+y^2\leq 1\}$.

\begin{proposition}
\label{C}
$\mathcal{C}$ is a convex semialgebraic set. A point $(x_0,y_0,x_1,y_1)$ lies in $\mathcal{C}$ if and only if it satisfies all the following five clauses:
\beqa
&x_0\leq x_1\:\land\:y_0\leq y_1\\
&\left(x_0\leq\:1\land y_0\leq 0\right)\lor\left(x_0\leq 0\land y_0\leq\:1\right)\lor\left(x_0^2+y_0^2\leq 1\right)\\
&\left(x_1\geq -1\land y_0\leq 0\right)\lor\left(x_1\geq 0\land y_0\leq\:1\right)\lor\left(x_1^2+y_0^2\leq 1\right)\\
&\left(x_1\geq -1\land y_1\geq 0\right)\lor\left(x_1\geq 0\land y_1\geq -1\right)\lor\left(x_1^2+y_1^2\leq 1\right)\\
&\left(x_0\leq\: 1\land y_1\geq 0\right)\lor\left(x_0\leq 0\land y_1\geq -1\right)\lor\left(x_0^2+y_1^2\leq 1\right)
\eeqa
\end{proposition}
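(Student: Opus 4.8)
The plan is to reduce the geometric definition of $\mathcal{C}$ to an explicit distance condition, from which both convexity and the five clauses will follow with little effort. First I would observe that $(x_0,y_0,x_1,y_1)\in\mathcal{C}$ holds precisely when there is a point $(x,y)$ of the closed unit disc lying inside the rectangle, i.e.\ satisfying $x_0\le x\le x_1$ and $y_0\le y\le y_1$ (if $x_0>x_1$ or $y_0>y_1$ there is no such rectangle at all, and accordingly no such $(x,y)$). Thus $\mathcal{C}$ is the image of the set
\[
W=\left\{(x_0,y_0,x,y,x_1,y_1)\in\R^6 : x^2+y^2\le 1,\ x_0\le x\le x_1,\ y_0\le y\le y_1\right\}
\]
under the linear projection $(x_0,y_0,x,y,x_1,y_1)\mapsto(x_0,y_0,x_1,y_1)$. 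Since $W$ is the intersection of a convex set (the solid cylinder over the unit disc) with finitely many closed half-spaces it is convex, and a linear image of a convex set is convex; hence $\mathcal{C}$ is convex. It is likewise a linear image of a semialgebraic set, hence semialgebraic by Tarski--Seidenberg --- though semialgebraicity will be manifest once the explicit description is proved, that being a finite Boolean combination of polynomial inequalities.

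For the explicit characterization I would first dispose of clause~1: if $x_0>x_1$ or $y_0>y_1$ the rectangle is empty, so assume $x_0\le x_1$ and $y_0\le y_1$, whence $R=[x_0,x_1]\times[y_0,y_1]$ is a nonempty closed convex set. Then $R$ meets the closed unit disc $D$ if and only if the point of $R$ nearest the origin lies in $D$. That nearest point is obtained by clamping the origin coordinatewise into $R$, so $\mathrm{dist}(0,R)^2=\delta_x^2+\delta_y^2$ with $\delta_x=\max(x_0,-x_1,0)$ and $\delta_y=\max(y_0,-y_1,0)$ (at most one of $x_0,-x_1$ is positive, since $x_0\le x_1$, which is why this $\max$ really is the distance from $0$ to $[x_0,x_1]$). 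Therefore, under clause~1, the point lies in $\mathcal{C}$ if and only if $\delta_x^2+\delta_y^2\le1$.

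Next I would introduce the four nonnegative quantities $a=\max(x_0,0)$, $b=\max(-x_1,0)$, $c=\max(y_0,0)$, $e=\max(-y_1,0)$, so that $\delta_x=\max(a,b)$ and $\delta_y=\max(c,e)$, and hence
\[
\delta_x^2+\delta_y^2=\max(a^2,b^2)+\max(c^2,e^2),
\]
which is the largest of the four sums $a^2+c^2$, $a^2+e^2$, $b^2+c^2$, $b^2+e^2$. Thus $\delta_x^2+\delta_y^2\le1$ holds exactly when all four of these sums are $\le1$. It then only remains to verify, by an elementary case split on the signs of the two relevant coordinates, that these four inequalities are precisely clauses~2, 3, 4, 5 in order: clause~2 $\Leftrightarrow a^2+c^2\le1$ (splitting on the signs of $x_0,y_0$), clause~3 $\Leftrightarrow b^2+c^2\le1$ (signs of $x_1,y_0$), clause~4 $\Leftrightarrow b^2+e^2\le1$ (signs of $x_1,y_1$), clause~5 $\Leftrightarrow a^2+e^2\le1$ (signs of $x_0,y_1$). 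In each clause the first two disjuncts cover the two mixed-sign quadrants, where only one coordinate contributes to the squared distance, while the third disjunct covers the all-positive quadrant where both contribute.

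No step presents a real obstacle; the points demanding care are getting the nearest-point squared-distance formula for an axis-parallel rectangle right --- in particular the form $\max(x_0,-x_1,0)$ and the fact that exactly one summand is active --- and running the four sign case analyses in the final step without sign slips. One should also double-check the bookkeeping that pairs each clause with the correct corner of the rectangle (clause~3 with the lower-right corner $(x_1,y_0)$, clause~5 with the upper-left corner $(x_0,y_1)$, and so on), since this is exactly the kind of thing that is easy to get transposed.
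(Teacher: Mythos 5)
Your proof is correct, and it takes a genuinely different route from the paper's. Both arguments begin identically, realizing $\mathcal{C}$ as the linear projection of the convex semialgebraic set $\widetilde{\mathcal{C}}=\{(x,y,x_0,y_0,x_1,y_1):x_0\le x\le x_1,\ y_0\le y\le y_1,\ x^2+y^2\le 1\}$, which settles convexity and semialgebraicity. They diverge at the elimination of the dummy point $(x,y)$. The paper describes the disc by its infinite family of supporting half-planes $x\cos\alpha+y\sin\alpha\le 1$ and performs a Fourier--Motzkin-style elimination: it argues that only positive combinations involving at most one $\alpha$-inequality can be tight, substitutes $x_0$ or $x_1$ (and $y_0$ or $y_1$) according to the signs of $\cos\alpha$ and $\sin\alpha$, and then recognizes each resulting quarter-circle family of linear inequalities as one of the clauses. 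You instead use the metric characterization: a nonempty box meets the disc iff its nearest point to the origin does, the nearest point is the coordinatewise clamp, so the condition is $\max(x_0,-x_1,0)^2+\max(y_0,-y_1,0)^2\le 1$; distributing the maxima turns this into four corner conditions, each matching a clause after a sign case split. Your route is more elementary and self-contained---it avoids the paper's slightly delicate claim that combinations of two or more $\alpha$-inequalities are dominated, replacing it by the standard nearest-point fact for axis-aligned boxes---while the paper's dual/elimination method is more systematic and would adapt more directly if the disc were replaced by a general convex body given by its support function. I checked the bookkeeping you were worried about: the pairing of clauses $2$--$5$ with the corners $(x_0,y_0)$, $(x_1,y_0)$, $(x_1,y_1)$, $(x_0,y_1)$ is consistent with the stated clauses, and the biconditional with clause $1$ is handled correctly in both directions.
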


\begin{proof}
$\mathcal{C}$ is the projection obtained by forgetting the first two coordinates of the points in the set
\beq
\widetilde{\mathcal{C}}\equiv\left\{\left(x,y,x_0,y_0,x_1,y_1\right)\in\R^6\:\big|\:x_0\leq x\leq x_1,\:y_0\leq y\leq y_1,\:x^2+y^2\leq 1\right\}.
\eeq
Since $\widetilde{\mathcal{C}}$ is convex semi-algebraic, so is any projection of it, and therefore $\mathcal{C}$.

A description of $\widetilde{\mathcal{C}}$ in terms of linear inequalities is given by
\beqa
&-x+x_0\leq 0,\quad x-x_1\leq 0\\
&-y+y_0\leq 0,\quad y-y_1\leq 0\\
&x\cdot\cos\alpha+y\cdot\sin\alpha\leq 1\quad\forall\alpha\in[0,2\pi]
\eeqa
From this, one obtains the linear inequalities that define $\mathcal{C}$ by taking all these positive linear combinations for which the dummy variables $x$ and $y$ drop out. There are exactly two such combinations that do not use the $\alpha$-family inequalities, and they are $x_0\leq x_1$ and $y_0\leq y_1$. On the other hand, if such a linear combination contains $\alpha$-family inequalities for \deleted{at least} two \added{or more} different values of $\alpha$, the inequality cannot be tight, since any non-trivial positive linear combination of the $\alpha$-family inequalities for different values of $\alpha$ is dominated by a single one with another value of $\alpha$. Therefore, it suffices to conisder each value of $\alpha$ at a time, and add appropriate multiples of the other inequalities such that $x$ and $y$ drop out. Since for both $x$ and $y$ and each sign, there is exactly one inequality among the first four that contains that variable with that sign, there is a unique way to replace $x$ by $x_0$ or $x_1$ and \added{a unique way to replace} $y$ by $y_0$ or $y_1$. Depending on the value of $\alpha$, there are four sign combinations to consider, and the result is the following set of inequalities:
\beqa
&&x_0\cdot\cos\alpha+y_0\cdot\sin\alpha\leq 1\quad\forall\alpha\in[0,\pi/2],\\
&&x_1\cdot\cos\alpha+y_0\cdot\sin\alpha\leq 1\quad\forall\alpha\in[\pi/2,\pi],\\
&&x_1\cdot\cos\alpha+y_1\cdot\sin\alpha\leq 1\quad\forall\alpha\in[\pi,3\pi/2],\\
&&x_0\cdot\cos\alpha+y_1\cdot\sin\alpha\leq 1\quad\forall\alpha\in[3\pi/2,2\pi].
\eeqa
Each of these families of inequalities in turn is equivalent to the corresponding clause above; for example, $\alpha\in[0,\pi/2]$ bounds a region defined by the lines $x_0=1$, $y_0=1$ and the circular arc in the first quadrant of the $x_0$-$y_0$-plane. This region coincides with the one defined by the first of the clauses above. This works in the same way for the other three families.
\end{proof}

\begin{theorem}
\label{ncmp2}
Given real numbers $M'_1(n,s)$, $M'_x(n,s)$ and $M'_z(n,s)$ for each $n\in\N_0$ and \mbox{$s\in\{0,\ldots,n\}$}, there exists a state $\rho$ on $\mathcal{A}$ that has the (integer and half-integer) moments
\beqn
\label{rels2}
\begin{split}
&M'_1(n,s)=\rho\left(t^{n-s}(1-t)^s\cdot\mathbbm{1}_2\right)\\
&M'_x(n,s)=\rho\left(t^{n-s+1/2}(1-t)^s\cdot\sigma_x\right)\\
&M'_z(n,s)=\rho\left(t^{n-s}(1-t)^{s+1/2}\cdot\sigma_z\right)
\end{split}
\eeqn
if and only if \added{all of these numbers lie in $[-1,+1]$ and} the following \added{additional} conditions hold:
\begin{itemize}
\item probability conservation:
\beqn
\label{probconserv2}
M'_i(n,s)=M'_i(n+1,s)+M'_i(n+1,s+1)\quad\forall i\in\{1,x,z\}
\eeqn
\item non-negativity:
\beqn
\label{nonneg2}
M'_1(n,s)\geq 0
\eeqn
for all $n\in\N_0$ and $s\in\{0,\ldots,n\}$. Furthermore,\footnote{Note that all sums are automatically absolutely convergent since $|M_i|\leq 1$ and $\sum_{k=0}^\infty\left|\binom{1/2}{k}\right|=1<\infty$.}
\beqn
\label{newng}
\begin{split}
&\left(\sum_{k=0}^\infty(-1)^k\binom{1/2}{k}M'_x(n+k-1,s+k)\right)^2\\
+&\left(\sum_{k=0}^\infty(-1)^k\binom{1/2}{k}M'_z(n+k-1,s-1)\right)^2\leq M'_1(n,s)^2
\end{split}
\eeqn
for $n\in\N$ and $s\in\{1,\ldots,n-1\}$. Finally, using the coefficients
\beq
c_{n,k}=(-1)^k\binom{-1/2}{k}-(-1)^{k-n}\binom{-1/2}{k-n}
\eeq
and the quantities
\beq
\begin{split}
V_{x,\pm}(n)=&\sum_{k=0}^\infty c_{n,k}M'_x(k,k)\\
&\pm\sqrt{M'_1(n,n)^2-\left(\sum_{k=0}^\infty(-1)^k\binom{1/2}{k}M'_z(n+k-1,n-1)\right)^2}\\
V_{z,\pm}(n)=&\sum_{k=0}^\infty c_{n,k}M'_z(k,0)\\
&\pm\sqrt{M'_1(n,0)^2-\left(\sum_{k=0}^\infty(-1)^k\binom{1/2}{k}M'_x(n+k-1,k)\right)^2}
\end{split}
\eeq
the point in $\R^4$ given by
\beqn
\label{point}
\left(\sup_{n}V_{x,-}(n),\:\sup_nV_{z,-}(n),\:\inf_{n}V_{x,+}(n),\:\inf_nV_{z,+}(n)\right)
\eeqn
has to lie in \added{the convex region} $\mathcal{C}$ \added{characterized in proposition\re{C}}.\footnote{In particular, the expressions under the square roots have to be non-negative and the suprema and infima have to be finite.}
\item normalization:
\beqn
\label{normalized2}
M'_1(0,0)=1
\eeqn
\end{itemize}
\end{theorem}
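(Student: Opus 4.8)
The plan is to reduce Theorem~\ref{ncmp2} to Theorem~\ref{ncmp1} by absorbing the two half-integer powers $t^{1/2}$ and $(1-t)^{1/2}$ into the state. A convenient mental picture: after Hahn--Banach a state on $\mathcal{A}=\mathscr{C}([0,1],M_2(\C))$ is the same thing as a positive $M_2(\C)$-valued Borel measure $E$ on $[0,1]$ with $\mathrm{tr}\,E([0,1])=1$, and writing $dE=\tfrac12(d\mu_1\,\mathbbm{1}_2+d\mu_x\,\sigma_x+d\mu_y\,\sigma_y+d\mu_z\,\sigma_z)$ positivity says $\mu_1\geq 0$ and the densities $f_x,f_y,f_z$ of $\mu_x,\mu_y,\mu_z$ with respect to $\mu_1$ satisfy $f_x^2+f_y^2+f_z^2\leq 1$ almost everywhere. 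Since no test function in\re{rels2} involves $\sigma_y$, a representing state exists iff there are $\mu_1\geq0$ with $\mu_1([0,1])=1$ and signed $\mu_x=f_x\mu_1$, $\mu_z=f_z\mu_1$ with $f_x^2+f_z^2\leq1$ such that $M'_1(n,s)=\int t^{n-s}(1-t)^s\,d\mu_1$, while $M'_x(n,s)$ and $M'_z(n,s)$ are the $(n,s)$-moments $\int t^{n-s}(1-t)^s\,d(\cdot)$ of the measures $\sqrt{t}\,\mu_x$ and $\sqrt{1-t}\,\mu_z$ respectively. In the actual write-up one can avoid measure theory altogether and argue, as in the proof of Theorem~\ref{ncmp1}, directly with functionals on polynomial subspaces; the measure picture is only a guide.

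Using the uniformly convergent binomial series $t^{1/2}=\sum_{k\geq0}(-1)^k\binom{1/2}{k}(1-t)^k$ and $(1-t)^{1/2}=\sum_{k\geq0}(-1)^k\binom{1/2}{k}t^k$ (convergence from $\sum_k|\binom{1/2}{k}|=1$), I would pass from the half-integer data to the ordinary moments $\widetilde{M}_x(n,s),\widetilde{M}_z(n,s)$ of $\mu_x,\mu_z$ themselves, setting $\widetilde{M}_1:=M'_1$, $\widetilde{M}_x(n,s):=\sum_k(-1)^k\binom{1/2}{k}M'_x(n+k-1,s+k)$ (defined for $s\leq n-1$) and $\widetilde{M}_z(n,s):=\sum_k(-1)^k\binom{1/2}{k}M'_z(n+k-1,s-1)$ (defined for $s\geq1$). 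Probability conservation\re{probconserv2} propagates to the $\widetilde{M}$'s and there forces every remaining value in terms of just two real parameters $\xi:=\widetilde{M}_x(0,0)$ and $\eta:=\widetilde{M}_z(0,0)$: these are exactly the data invisible to the half-integer moments, because $\sqrt t$ vanishes at $t=0$ and $\sqrt{1-t}$ at $t=1$, so the mass that $\mu_x$ places at $0$ and that $\mu_z$ places at $1$ is not seen by\re{rels2}. A Vandermonde identity $\sum_k\binom{1/2}{k}\binom{1/2}{m-k}=\binom{1}{m}$ together with\re{probconserv2} shows conversely that the $M'$-moments are recovered from the $\widetilde{M}$'s, and the closed form $\sum_k c_{n,k}x^k=(1-x^n)(1-x)^{-1/2}$ gives $\widetilde{M}_x(n,n)=\xi-\sum_k c_{n,k}M'_x(k,k)$ and $\widetilde{M}_z(n,0)=\eta-\sum_k c_{n,k}M'_z(k,0)$.

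It then remains to match hypotheses. With $\widetilde{M}_1=M'_1$, probability conservation and $\widetilde{M}_1(0,0)=1$ are\re{probconserv2} and\re{normalized2}, and the necessity direction (any representing state yields such $\widetilde{M}$'s and $(\xi,\eta)$) is routine, using positivity of the relevant elements of $\mathcal{A}$ exactly as in the necessity part of Theorem~\ref{ncmp1}. The substantial content is that ncmp1's non-negativity $\widetilde{M}_1(n,s)^2\geq\widetilde{M}_x(n,s)^2+\widetilde{M}_z(n,s)^2$ splits cleanly: for $1\leq s\leq n-1$ both $\widetilde{M}_x(n,s),\widetilde{M}_z(n,s)$ are determined and, after substituting the series, the inequality is precisely\re{newng} together with $M'_1\geq0$ (which is\re{nonneg2}); for $s=n$ it reads $|\xi-\sum_k c_{n,k}M'_x(k,k)|\leq\sqrt{M'_1(n,n)^2-(\sum_k(-1)^k\binom{1/2}{k}M'_z(n+k-1,n-1))^2}$, i.e. $V_{x,-}(n)\leq\xi\leq V_{x,+}(n)$; for $s=0$ it reads $V_{z,-}(n)\leq\eta\leq V_{z,+}(n)$; and the corner $(0,0)$ reads $\xi^2+\eta^2\leq1$. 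Hence a permissible $(\xi,\eta)$ exists iff the rectangle $[\sup_nV_{x,-}(n),\inf_nV_{x,+}(n)]\times[\sup_nV_{z,-}(n),\inf_nV_{z,+}(n)]$ meets the unit disc, which by Proposition~\ref{C} is exactly the requirement that the point\re{point} lie in $\mathcal{C}$. So the hypotheses of Theorem~\ref{ncmp2} are equivalent to the existence of $(\xi,\eta)$ for which $(\widetilde{M}_1,\widetilde{M}_x,\widetilde{M}_z)$ satisfies all hypotheses of Theorem~\ref{ncmp1}; that theorem then supplies a state with the $\widetilde{M}$-moments, which by the recovery identity above should automatically have the prescribed half-integer moments.

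I expect the endpoint bookkeeping to be the main obstacle. One must check carefully that the half-integer data leaves exactly one scalar free at each of $t=0$ and $t=1$ and pins down everything else; that the $s=n$ and $s=0$ non-negativity conditions transcribe exactly into the inequalities defining $V_{x,\pm},V_{z,\pm}$ (this is where the explicit $c_{n,k}$ and $\sum_k c_{n,k}x^k=(1-x^n)(1-x)^{-1/2}$ enter); and, most delicately, that the state produced by Theorem~\ref{ncmp1} — a priori determined only on $\R[t]$-combinations of $\mathbbm{1}_2,\sigma_x,\sigma_z$ — really takes the prescribed values on $t^{1/2}\sigma_x$ and $(1-t)^{1/2}\sigma_z$ near the endpoints, which forces the boundary series $\sum_k(-1)^k\binom{1/2}{k}\widetilde{M}_x(n+k,n+k)$ to telescope back to $M'_x(n,n)$ and genuinely uses the finiteness of the suprema and infima of the $V$'s. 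An alternative that sidesteps this last point is to rerun the Bernstein-approximation argument of Theorem~\ref{ncmp1} verbatim but with the weighted basis functions $t^{n-s+1/2}(1-t)^s\sigma_x$ and $t^{n-s}(1-t)^{s+1/2}\sigma_z$: the interior estimate\re{bounded} then carries over essentially unchanged, and the extra slack needed to bound the approximating functionals near $t=0,1$ — where $\sqrt t,\sqrt{1-t}$ degenerate — is exactly what membership of\re{point} in $\mathcal{C}$ provides.
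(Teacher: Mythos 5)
Your proposal is correct and follows essentially the same route as the paper: reduce to Theorem~\ref{ncmp1} via the binomial series for $t^{1/2}$ and $(1-t)^{1/2}$, observe that the half-integer data determine all integer moments except the two free parameters $M_x(0,0)$ and $M_z(0,0)$, and note that the non-negativity condition of Theorem~\ref{ncmp1} splits into\re{newng} for $1\leq s\leq n-1$, the interval constraints $V_{x,\pm}$, $V_{z,\pm}$ for $s=n$ and $s=0$, and the unit-disc constraint at $(0,0)$, so that a consistent choice exists exactly when the point\re{point} lies in $\mathcal{C}$. The endpoint bookkeeping you flag as the main risk is handled in the paper by the same telescoping identity for $c_{n,k}$ and a direct verification of the recovery formulas, treating $s\in\{1,\ldots,n-1\}$ separately from $s=0$ and $s=n$.
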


\begin{proof}
It will be shown first that these conditions are necessary. This is immediate for\re{probconserv2},\re{nonneg2} and\re{normalized2}. Furthermore, the (uniformly convergent) binomial expansions
\beq
\sqrt{t}=\sqrt{1-(1-t)}=\sum_{k=0}^\infty(-1)^k\binom{1/2}{k}(1-t)^k\\
\eeq
\beq
\sqrt{1-t}=\sum_{k=0}^\infty(-1)^k\binom{1/2}{k} t^k
\eeq
can be applied to express most of the integer moments of a given state in terms of the half-integer moments of that state,
\beqn
\label{recover}
\begin{split}
\rho\left(t^{n-s}(1-t)^s\sigma_x\right)&=\sum_{k=0}^\infty(-1)^k\binom{1/2}{k}\rho\left(t^{n-s-1/2}(1-t)^{s+k}\sigma_x\right),\quad s\in\{0,\ldots,n-1\}\\
\rho\left(t^{n-s}(1-t)^s\sigma_z\right)&=\sum_{k=0}^\infty(-1)^k\binom{1/2}{k}\rho\left(t^{n-s+k}(1-t)^{s-1/2}\sigma_z\right),\quad s\in\{1,\ldots,n\}.
\end{split}
\eeqn
In the present notation\re{rels} and\re{rels2}, this reads
\beqn
\begin{split}
\label{integerfromhalf}
M_x(n,s)=\sum_{k=0}^\infty(-1)^k\binom{1/2}{k}M'_x(n+k-1,s+k)&,\quad s\in\{0,\ldots,n-1\}\\
M_z(n,s)=\sum_{k=0}^\infty(-1)^k\binom{1/2}{k}M'_z(n+k-1,s-1)&,\quad s\in\{1,\ldots,n\}.
\end{split}
\eeqn
Together with\re{nonneg}, these formulas imply the constraint\re{newng} for all relevant values \mbox{$s\in\{1,\ldots,n-1\}$}. Given in addition $M_x(0,0)=\rho(\sigma_x)$ and $M_z(0,0)=\rho(\sigma_z)$, the \replaced{missing}{remaining} integer moments undetermined by\re{integerfromhalf} can be calculated as
\beqn
\label{missingmoments}
\begin{split}
&M_x(n,n)\stackrel{(\ref{probconserv})}{=}M_x(0,0)-\sum_{k=1}^nM_x(k,k-1)\stackrel{(\ref{integerfromhalf})}{=}M_x(0,0)-\sum_{k=0}^\infty c_{n,k}M'_x(k,k),\\
&M_z(n,0)\stackrel{(\ref{probconserv})}{=}M_z(0,0)-\sum_{k=1}^nM_z(k,1)\stackrel{(\ref{integerfromhalf})}{=}M_z(0,0)-\sum_{k=0}^\infty c_{n,k}M'_z(k,0).
\end{split}
\eeqn
where the second steps also involve rearrangements of the sums. Since $M_x(n,n)$ is constrained by\re{nonneg} to have an absolute value of at most
\beq
\sqrt{M_1(n,n)^2-M_z(n,n)^2}=\sqrt{M'_1(n,n)^2-\left(\sum_{k=0}^\infty\binom{1/2}{k}M'_z(n+k-1,n-1)\right)^2},
\eeq
equation\re{missingmoments} shows that $M_x(0,0)$ has to lie in the interval
\beqn
\label{interval}
[V_{x,-}(n),V_{x,+}(n)]
\eeqn
for all $n$; therefore, it also has to lie in the intersection of all these intervals, which is the interval
\beq
\left[\sup_nV_{x,-}(n),\inf_nV_{x,+}(n)\right].
\eeq
Exactly analogous considerations show that $M_z(0,0)$ has to lie in the interval
\beq
\left[\sup_nV_{z,-}(n),\inf_nV_{z,+}(n)\right].
\eeq
Now one concludes that the point\re{point} has to be in $\mathcal{C}$ by the additional constraint 
\beqn
\label{disk}
M_x(0,0)^2+M_z(0,0)^2\leq M_1(0,0)^2=1.
\eeqn

For the converse direction, it will be shown that the assumptions imply the existence of moments $M_x(n,s)$ and $M_z(n,s)$ satisfying the hypotheses of theorem~\ref{ncmp1} such that the $M'_x$ and $M'_z$ can be recovered as
\beqn
\label{halffrominteger}
\begin{split}
&M'_x(n,s)=\sum_{k=0}^\infty(-1)^k\binom{1/2}{k}M_x(n+k,s+k)\\
&M'_z(n,s)=\sum_{k=0}^\infty(-1)^k\binom{1/2}{k}M_z(n+k,s),
\end{split}
\eeqn
and such that the $M_1(n,s)$ coincide with the $M'_1(n,s)$. To begin, use\re{integerfromhalf} to define $M_x(n,s)$ for $s\in\{0,\ldots,n-1\}$ and $M_z(n,s)$ for $s\in\{1,\ldots,n\}$. As soon as additionally the values for $M_x(0,0)$ and $M_z(0,0)$ are determined, the remaining integer moments are defined by\re{missingmoments}. Then it can be verified by direct calculation---treating the cases $s\in\{1,\ldots,n-1\}$ separately from $s=0$ and $s=n$---that the equations\re{halffrominteger} hold, independently of the chosen values for $M_x(0,0)$ and $M_z(0,0)$.

It remains to verify that, with these definitions of $M_x$ and $M_z$, the requirements of theorem\re{ncmp1} can be satisfied for appropriate choices of $M_x(0,0)$ and $M_z(0,0)$. The equations\re{probconserv} easily follow by direct calculation, using\re{probconserv2}. Again by the binomial expansions, the second part of\re{nonneg} is directly equivalent to\re{newng} for $s\in\{1,\ldots,n-1\}$. In the case that $s=n>0$, it holds as long as $M_x(0,0)$ is chosen to lie in the interval\re{interval}; a similar statement holds for $s=0$ and $n>0$. For $s=n=0$, the constraint is equivalent to\re{disk} and means that $\left(M_x(0,0),M_z(0,0)\right)$ has to lie in the unit disk of $\R^2$. By the assumption that\re{point} lies in $\mathcal{C}$, it follows that a consistent choice for $M_x(0,0)$ and $M_z(0,0)$ that satisfies all these requirements is indeed possible.
\end{proof}

\end{appendix}

\begin{thebibliography}{10}
\bibitem[AV07]{Aha} Yakir Aharanov, Lev Vaidman: \textsl{The Two-State Vector Formalism: An Updated Review}, Lecture Notes in Physics 734, 2007.
\bibitem[AS01]{AS} Erik M. Alfsen, Frederic W. Shultz: \textsl{State spaces of operator algebras}, Birkh\"auser 2001.
\bibitem[CF91]{CF} Ra\'ul E. Curto, Lawrence A. Fialkow: \textsl{Recursiveness, positivity and truncated moment problems}, Houston Journal of Mathematics 17(4), 1991.
\bibitem[Kh09]{Kh} \added{Andrei Khrennikov: \textit{Contextual Approach to Quantum Formalism}, Springer 2009.}
\bibitem[Lo86]{Lo} George G. Lorentz: \textsl{Bernstein Polynomials}, Chelsea Publishing Company 1986.
\bibitem[RS89]{RS} Iain Raeburn, Allan M. Sinclair: \textsl{The {$C\sp *$}-algebra generated by two projections}, Math. Scand. 65/2 (1989), pp. 278--290.
\bibitem[RLS04]{RLS} \added{Kevin J. Resch, Jeff S. Lundeen, Aephraim M. Steinberg: \textsl{Experimental realization of the quantum 3-box problem}, Physics Letters A 324 (2004), pp. 125--131.}
\bibitem[Wi46]{Wi} David Widder: \textsl{The Laplace Transform}, Princeton University Press 1946.
\end{thebibliography}
\end{document}